\title{Expected capture time and throttling number for cop versus gambler}
\author{Jesse Geneson\thanks{Department of Mathematics, Iowa State University,
Ames, IA 50011, USA (geneson@gmail.com).}
\and Carl Joshua Quines\thanks{cj@cjquines.com} 
\and Espen Slettnes\thanks{Abel Academy (Homeschool); UC Berkeley Extension, Berkeley, CA 94720, USA (eslettnes@berkeley.edu)}
\and Shen-Fu Tsai\thanks{parity@gmail.com}}
\date{}
\setlist{itemsep=0pt,topsep=0pt}
\let\thetitle        \@title
\let\theauthor       \@author
\let\thedate         \@date
\newtheorem{lem}{Lemma}
\theoremstyle{definition}
\newtheorem{thm}[lem]{Theorem}
\newtheorem{remark}[lem]{Remark}
\newtheorem{definition}[lem]{Definition}
\newtheorem{cor}[lem]{Corollary}
\newcommand{\pr}{\mathbf{Pr}}
\newcommand{\e}{\mathbf{E}}
\newcommand{\ut}[1] {{\upsilon\left(#1\right)}}
\newcommand{\kt}[1] {{\kappa\left(#1\right)}}
\newcommand{\ot}[2][] {{o_{#1}\left(#2\right)}}
\newcommand{\wmw}[1] {{\mathrm{WMW}\left(#1\right)}}
\newcommand{\kw}[1] {{\mathrm{KW}\left(#1\right)}}
\newcommand{\ceil}[1] {{\left\lceil{#1}\right\rceil}}
\newcommand{\floor}[1] {{\left\lfloor{#1}\right\rfloor}}
\newcommand{\abs}[1] {{\left|{#1}\right|}}
\renewcommand{\pmod}[1]{\,\left(\mathrm{mod\,}{#1}\right)}
\begin{document}
\maketitle

\begin{abstract}
We bound expected capture time and throttling number for the cop versus gambler game on a connected graph with $n$ vertices, a variant of the cop versus robber game that is played in darkness, where the adversary hops between vertices using a fixed probability distribution. The paper that originally defined the cop versus gambler game focused on two versions, a known gambler whose distribution the cop knows, and an unknown gambler whose distribution is secret. We define a new version of the gambler where the cop makes a fixed number of observations before the lights go out and the game begins. We show that the strategy that gives the best possible expected capture time of $n$ for the known gambler can also be used to achieve nearly the same expected capture time against the observed gambler when the cop makes a sufficiently large number of observations. We also show that even with only a single observation, the cop is able to achieve an expected capture time of approximately $1.5n$, which is much lower than the expected capture time of the best known strategy against the unknown gambler (approximately $1.95n$).
\end{abstract}

\section*{Introduction}
Graph pursuit has been investigated for over a century \cite{dudeney}, with most progress in the last few decades focusing on the game of cop versus robber \cite{berarducci, bonato, hahn1, hahn2}, starting with \cite{robber0, robber0q}. The main problems that have been investigated are whether the cop wins a graph if both cop and robber play optimally, how many cops are necessary to win a graph, and how long it takes to capture the robber.

Both the cop and robber take turns moving from vertex to adjacent vertex, or staying in place, so they are allowed to make the exact same set of moves, but the robber tries to evade and the cop tries to pursue. Other adversaries besides the robber have also been investigated, such as the rabbit \cite{adler, babichenko} which is allowed to jump to non-adjacent vertices. 

Komarov and Winkler introduced the gambler in \cite{gambler0}. The cop and gambler play on a connected $n$-vertex graph. First, the cop chooses an intitial position; then, the gambler chooses a probability distribution $p_1, \dots, p_n$. Every turn, the cop moves to an adjacent vertex or stays put, while simultaneously the gambler moves to vertex $v_i$ with probability $p_i$. The gambler is called \emph{known} if the cop knows his distribution before the cop takes his first turn. Otherwise the gambler is called \emph{unknown}.

The gambler was motivated by a problem in software design \cite{gambler0}. Suppose that an anti-incursion program must navigate a linked list of ports in order to minimize the interception time for an enemy packet. The known gambler models the case when the enemy's port distribution is known, while the unknown gambler models the case when the port distribution is unknown.

Komarov and Winkler proved that if both the cop and known gambler use best play, then the known gambler has expected capture time $n$ for any connected $n$-vertex graph \cite{gambler0}. They also found a pursuit algorithm for all connected $n$-vertex graphs that has expected capture time at most approximately $1.97n$ for the unknown gambler. This upper bound for the unknown gambler was improved to approximately $1.95n$ in \cite{jg1}. For the case of $n$-vertex cycle graphs, Komarov found sharp upper and lower bounds of approximately $1.082n$ for the unknown gambler \cite{komarov}. Komarov and Winkler conjectured that the upper bound for the unknown gambler can be improved to $3n/2$, with the star being the worst case \cite{gambler0}.

The game can be adapted to $k$ or more cops, where two cops can occupy the same vertex, one cop is needed to catch the gambler, and all cops can communicate with each other. Similarly to the original version, the gambler is \emph{known} if all cops know the distribution chosen by the gambler, and \emph{unknown} otherwise.

We introduce a version of the gambler in between the known and unknown gamblers. In this version, the cop does not know the distribution, but the gambler is visible for $t$ turns before the game starts (during which the cop is frozen at their initial position and cannot catch the gambler). The cop is also restricted to their initial position in the first turn of the game. We call this the \emph{$t$-observed gambler}. This version of the gambler is motivated by the problem of designing an anti-incursion program for an enemy whose port distribution is unknown, except for the knowledge of a given number of ports that the enemy has accessed so far.

Note that our definition of the $t$-observed gambler is not the same as if we allow the cop to see the gambler for the first $t$ moves of the game, and then we turn the lights out. We discuss this alternative variation of the observed gambler at the end of the paper. In particular, we show in the conclusion that the expected capture time for an $n$-vertex star is exactly $n$ if the observation happens in-game. 

In addition to expected capture time, we also investigate the throttling number for the cop versus gambler game. Throttling minimizes the sum of the number of cops and the expected capture time. The throttling number has previously been investigated for the cop versus robber game \cite{cvr} and zero forcing \cite{butler}.

In Section \ref{s:wmw}, we present an algorithm called Watch-Move-Wait ($\wmw{1}$). Against the $1$-observed gambler, on a connected $n$-vertex graph of radius $r$, our algorithm gives expected capture time at most $n+r$. As a corollary, connected $n$-vertex graphs have expected capture time at most $3n/2$ and $n$-vertex stars have expected capture time at most $n+1$ for the $1$-observed gambler.

In Section \ref{s:gwmw}, we generalize $\wmw{1}$ to a family of strategies $\wmw{t}$, and introduce another family of strategies called Komarov--Winkler ($\kw{t}$), based on an algorithm in \cite{gambler0}. Against the $t$-observed gambler, we prove that $t$ must be at least $\Omega(\sqrt{n})$ for either generalized $\wmw{t}$ or $\kw{t}$ to perform substantially better than $\wmw{1}$. Also, we find that for $t = \omega(n^2)$, $\kw{t}$ works almost as well against the $t$-observed gambler as the known gambler.

In Section \ref{s:dist}, we modify $\wmw{1}$ for when there are $k$ cops pursuing the $1$-observed gambler. For this case, we show that the $1$-observed gambler has expected capture time at most $3n/k-1$ in general, and $n/k+2$ for $n$-vertex stars. The general bound improves on the bound of $3.94n/k+O(1)$ from \cite{jg2}.

In Section \ref{s:throttling}, we find $\Theta\left(\sqrt n\right)$ bounds on the throttling numbers of the known, unknown, and observed gamblers for all connected graphs $G$ on $n$ vertices, as well as sharper bounds for families of graphs. In \cite{cvr}, it was conjectured that the maximum possible throttling number for the cop versus robber game on a connected $n$-vertex graph is also $\Theta\left(\sqrt n\right)$. However, this conjecture was later refuted in \cite{cvr2}, which exhibited connected $n$-vertex graphs with throttling number of $\Omega(n^{2/3})$ for the cop versus robber game. Thus our results show that the gambler has a much lower maximum throttling number than the robber on connected $n$-vertex graphs.

\section{Watch-Move-Wait} \label{s:wmw}

We use $\wmw{1}$ to denote the strategy where the cop watches the gambler for a single turn before the game begins, then moves from their initial position to wherever the gambler appeared, and waits there until she catches the gambler.

\begin{thm}\label{1-obs}
    Against the $1$-observed gambler on a connected $n$-vertex graph of radius $r$, the $\wmw{1}$ strategy has expected capture time at most $n+r$.
\end{thm}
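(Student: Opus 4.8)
The plan is to have the cop begin at a center $c$ of $G$, so that $d(c,v)\le r$ for every vertex $v$. Write $p_1,\dots,p_n$ for the gambler's (fixed) distribution and let $X_0$ be the vertex at which the gambler is seen during the single watching turn, so $\pr[X_0=v_i]=p_i$. The $\wmw{1}$ strategy is then: on the first, frozen turn of the game the cop stays at $c$; starting from the second turn she walks along a shortest $c$--$X_0$ path, reaching $X_0$ at turn $1+d(c,X_0)$; thereafter she waits at $X_0$. Since the cop certainly catches the gambler no later than the first turn $T^{*}\ge 1+d(c,X_0)$ at which the gambler is at $X_0$, it suffices to bound $\e[T^{*}]$.

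The key point is that the gambler's positions $X_1,X_2,\dots$ during the game are i.i.d.\ with law $(p_1,\dots,p_n)$ and are independent of the observed position $X_0$, while the cop's route is a deterministic function of $X_0$ (and of the known graph). Condition on $X_0=v_i$ for some $i$ with $p_i>0$. Then the cop occupies $v_i$ from turn $1+d(c,v_i)$ onward, and the number of further turns until the gambler is first at $v_i$ is geometric with parameter $p_i$; combining, $\e\!\left[T^{*}\mid X_0=v_i\right]=d(c,v_i)+1/p_i$.

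Averaging over the observation gives
\[
  \e[T^{*}] \;=\; \sum_{i\,:\,p_i>0} p_i\!\left(d(c,v_i)+\frac{1}{p_i}\right) \;=\; \sum_{i\,:\,p_i>0} p_i\,d(c,v_i) \;+\; \bigl|\{\,i:p_i>0\,\}\bigr| \;\le\; r+n,
\]
using $d(c,v_i)\le r$ for the first sum and $\bigl|\{i:p_i>0\}\bigr|\le n$ for the second. This is the claimed bound of $n+r$ on the expected capture time.

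There is no serious obstacle here; the argument is short. The two things to get right are the turn bookkeeping — the watching turn is free of charge and the cop is idle on the first turn of the game, so she reaches $X_0$ at turn $1+d(c,X_0)$, not $d(c,X_0)$ — and the exact independence statement that lets the post-observation analysis reduce, term by term, to the telescoping $\sum_i p_i\cdot(1/p_i)=n$ that already underlies Komarov and Winkler's optimal value $n$ for the known gambler; here that sum is paid for by the extra additive $\sum_i p_i\,d(c,v_i)\le r$ spent walking to the observed vertex.
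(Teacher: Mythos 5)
Your proposal is correct and follows essentially the same argument as the paper: start the cop at a center, move to the observed vertex (at most $r$ steps), wait for a geometric$(p_i)$ capture, and telescope $\sum_{i:p_i>0} p_i\left(d(c,v_i)+\frac{1}{p_i}\right)\le r+n$. The only difference is cosmetic --- you keep $d(c,v_i)$ inside the sum and spell out the turn bookkeeping and independence, whereas the paper bounds the travel term by $r$ immediately.
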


\begin{proof}
Let $G$ be a connected $n$-vertex graph of radius $r$ and suppose that the cop has initial position at a center vertex $c$ of $G$, so every other vertex in $G$ has distance at most $r$ to $c$. The probability that the gambler appears at $v_i$ on the observed turn is $p_i$, so the cop chooses $v_i$ as their final vertex with probability $p_i$. 

If $p_i > 0$ and the cop chooses final vertex $v_i$, then the expected capture time is at most $r+\frac{1}{p_i}$. Here, $r$ is the maximum possible distance to move to $v_i$, and $\frac{1}{p_i}$ is the expected capture time after the cop waits at $v_i$. 

Let $S = \left\{i: 1\leq i \leq n \wedge p_i > 0  \right\}$, the set of indices of all vertices with positive probability. By summing over $S$, the expected capture time is at most 
$$\sum\limits_{i \in S} p_i \left(r+\frac{1}{p_i}\right) = r+\abs{S} \leq r+n.$$
\end{proof}

Two corollaries follow from this. From the fact that every connected $n$-vertex graph has radius at most $n/2$, we find that:

\begin{cor}\label{upgen}
$\wmw{1}$ has expected capture time at most $3n/2$ against the $1$-observed gambler on any connected $n$-vertex graph. 
\end{cor}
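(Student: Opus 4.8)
The plan is to deduce Corollary \ref{upgen} directly from Theorem \ref{1-obs} by bounding the radius of a connected $n$-vertex graph. First I would recall that any connected graph $G$ on $n$ vertices has radius $r \leq n/2$: indeed, take any vertex $v$, and a BFS tree rooted at $v$ has some path of length equal to the eccentricity of $v$; picking $v$ to be a center of $G$, the eccentricity is $r$, and a shortest path realizing this eccentricity has $r+1$ vertices. More carefully, the standard fact is that in a connected $n$-vertex graph the diameter is at most $n-1$, and since $r \leq \operatorname{diam}(G)$ this only gives $r \leq n-1$; the sharper bound $r \leq n/2$ follows because a diametral path together with a shortest path from a center to one of its endpoints forces enough distinct vertices. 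The cleanest argument: let $u, w$ be vertices with $d(u,w) = \operatorname{diam}(G) = D$, and let $c$ be a center, so $d(c,u), d(c,w) \leq r$; by the triangle inequality $D \leq d(c,u) + d(c,w) \leq 2r$, but also any vertex $x$ satisfies $d(c,x) \leq r$, and considering a shortest $u$--$w$ path through its midpoint region shows that the eccentricity of that midpoint is at most $\lceil D/2 \rceil \leq \lceil (n-1)/2 \rceil \leq n/2$, hence $r \leq n/2$.

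With $r \leq n/2$ in hand, I would simply invoke Theorem \ref{1-obs}: against the $1$-observed gambler on a connected $n$-vertex graph of radius $r$, the expected capture time under $\wmw{1}$ is at most $n + r \leq n + n/2 = 3n/2$. That completes the corollary.

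I do not expect any genuine obstacle here, since this is a routine consequence; the only point requiring a moment's care is the justification that $r \leq n/2$ rather than merely $r \leq n-1$. The argument above via a diametral path and its midpoint is standard, so I would state it in a single sentence and move on. One should also note the bound $r \leq n/2$ is essentially tight for paths (where $n$ odd gives $r = (n-1)/2$ and $n$ even gives $r = n/2$), so no better universal constant is available, and correspondingly the $3n/2$ in the corollary cannot be improved by this method alone.
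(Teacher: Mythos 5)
Your deduction of the corollary is exactly the paper's: Theorem \ref{1-obs} gives expected capture time at most $n+r$, and the standard bound $r \le n/2$ for connected $n$-vertex graphs finishes it. On the main line of argument you and the paper coincide (the paper simply cites the radius bound as a known fact and does not prove it).

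However, the justification you offer for $r \le n/2$ contains a step that fails. You claim that the midpoint $m$ of a diametral path has eccentricity at most $\lceil D/2 \rceil$ in a general connected graph. That is true in trees but false in general: in the odd cycle $C_{2k+1}$ the diameter is $k$, yet every vertex --- in particular the midpoint of any diametral path --- has eccentricity $k$, not $\lceil k/2 \rceil$. (The conclusion $r \le n/2$ still holds there since $k = (n-1)/2$, but your argument does not establish it.) The standard repair is to pass to a spanning tree $T$ of $G$: distances in $T$ dominate distances in $G$, so $r(G) \le r(T)$; in a tree the midpoint of a diametral path genuinely is a center, so $r(T) = \lceil D(T)/2 \rceil \le \lceil (n-1)/2 \rceil \le n/2$. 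With that one-line fix your proof is complete and is essentially the paper's argument.
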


As stars have radius $1$, we also find that:

\begin{cor}
If both cop and gambler use best play, the expected capture time for the $1$-observed gambler on the $n$-vertex star is at most $n+1$. 
\end{cor}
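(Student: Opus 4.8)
This corollary is an immediate specialization of Theorem~\ref{1-obs}. The plan is simply to observe that the $n$-vertex star $K_{1,n-1}$ has a center vertex from which every other vertex is at distance $1$, so its radius is $r = 1$. Applying Theorem~\ref{1-obs} with $r = 1$ then yields expected capture time at most $n + 1$ for the $\wmw{1}$ strategy against the $1$-observed gambler on the star.

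The one point requiring a sentence of justification is why this bound holds under \emph{best play} for both the cop and the gambler, whereas Theorem~\ref{1-obs} is phrased in terms of the specific strategy $\wmw{1}$. For the cop: since $\wmw{1}$ is one available strategy, the optimal cop does at least as well, so the best-play expected capture time is $\le n+1$. For the gambler: the statement is an upper bound on the value of the game, so it holds regardless of which distribution the gambler chooses; in particular it holds when the gambler plays to maximize capture time. Hence no further work is needed beyond citing the theorem and plugging in $r=1$.

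There is no real obstacle here — the only thing to be careful about is the direction of the inequality and the quantifier structure (the cop \emph{minimizes}, the gambler \emph{maximizes}, and Theorem~\ref{1-obs} already accounts for the worst-case gambler since its bound $r + |S| \le r + n$ holds for every distribution). Given that, the proof is a two-line argument: invoke the theorem, substitute $r = 1$, and note that best play for the cop can only improve on the fixed strategy $\wmw{1}$.
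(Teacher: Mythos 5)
Your proposal is correct and matches the paper exactly: the paper derives this corollary in one line by noting that stars have radius $1$ and applying Theorem~\ref{1-obs}. Your extra remarks about why the strategy-specific bound transfers to best play are sound and consistent with how the paper implicitly treats the quantifiers.
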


The last upper bound is nearly tight, since $n$-vertex stars have expected capture time of $n$ for the known gambler, and the expected capture time of the known gambler is at most the expected capture time of the $1$-observed gambler.

\section{Generalized Watch-Move-Wait} \label{s:gwmw}
We use $\wmw{t}$ to denote the strategy where the cop observes $t$ turns and then goes to the modal vertex. If there is a tie, the cop goes to the closest mode. If there is still a tie, the cop picks among the tied vertices uniformly at random.

We also define a family of strategies using the algorithm defined by Komarov and Winkler in \cite{gambler0} for the known gambler. We call this the Komarov--Winkler (KW) strategy.

Define $\kw{t}$ as the strategy where the cop observes $t$ turns, calculates sample probabilities $p'_{i}$ for each vertex, and uses the Komarov--Winkler algorithm with the sample probabilities.

The next result shows that $t$ must be at least $\Omega(\sqrt{n})$ for either generalized WMW or KW to perform substantially better than with a single observation (i.e. $t = 1$).

\begin{thm}
For $t = o(\sqrt{n})$, there exists a gambler distribution on the cycle $C_{n}$ such that the expected capture time for the $t$-observed gambler is $\left(1.5-o(1)\right)n$ when the cop uses $\wmw{t}$ or $\kw{t}$. 
\end{thm}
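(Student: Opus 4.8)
The plan is to construct, for each sequence $t=t(n)=o(\sqrt n)$, a gambler distribution on $C_n$ against which $\wmw{t}$ (and, by a parallel argument, $\kw{t}$) still needs $(1.5-o(1))n$ turns, so that $\Omega(\sqrt n)$ observations are genuinely necessary to improve on the one-observation bound. Write $c$ for the cop's central starting vertex and $\bar c$ for its antipode. I would take the distribution $p$ that places mass $1-\varepsilon$ uniformly on a short arc $B$ of $b$ vertices centred at $\bar c$ and mass $\varepsilon$ uniformly on the remaining $n-b$ vertices, where $b$ is chosen with $t^2\ll b\ll n$ (possible precisely because $t=o(\sqrt n)$) and $\varepsilon=\varepsilon(t)\in(0,1)$ is a parameter to be tuned. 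The design is meant to create a ``rare but catastrophic wait'': each of the $n-b$ bulk vertices has probability $\Theta(\varepsilon/n)$, so camping on any one of them costs $\Theta(n/\varepsilon)$ turns, while $B$ is far ($\sim n/2$) from $c$ but thin enough ($b=o(n)$) that reaching it is negligible next to the waiting time.

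The first ingredient is a birthday lemma. Since $t=o(\sqrt n)$ and every vertex has probability $O(\max(1/b,\,1/n))=o(1/t)$, a union bound over the $\binom t2$ pairs of observations shows that with probability $1-o(1)$ all $t$ observed vertices are distinct; the choice $b\gg t^2$ is exactly what is needed to avoid repeats inside $B$. Consequently, with probability $1-o(1)$ the cop sees only a \emph{set} of $t$ vertices carrying no multiplicity information, so $\wmw{t}$ walks to the observed vertex nearest $c$ and $\kw{t}$ runs the Komarov--Winkler algorithm for the \emph{uniform} distribution on the observed set.

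For $\wmw{t}$ I would condition on how the $t$ observations split between $B$ (about $(1-\varepsilon)t$ of them) and the bulk (about $\varepsilon t$), controlling the split by a Chernoff bound. On the event that some bulk vertex is observed, the cop walks to the nearest one, which lies at expected distance $\approx \tfrac{n}{2(\varepsilon t+1)}$ from $c$ by a routine order-statistics estimate, then waits there for an expected $\Theta(n/\varepsilon)$ turns; the probability of catching the gambler en route is at most (distance)$\cdot\varepsilon/n=O(1/t)$, which is $o(1)$ when $t\to\infty$ and otherwise a bounded factor absorbed by the tuning. On the complementary event (no bulk vertex observed, probability $(1-\varepsilon)^t$) the cop instead walks $\sim n/2$ to $B$ and waits only $o(n)$ turns. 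Averaging gives $\e[\text{capture}]=\Phi(\varepsilon,t)\,n+o(n)$ for an explicit continuous $\Phi$ with $\Phi(\varepsilon,t)\to t+\tfrac12\ge\tfrac32$ as $\varepsilon\to0$ and $\Phi(\varepsilon,t)\to 1+\tfrac1{2(t+1)}\le\tfrac54$ as $\varepsilon\to1$; by the intermediate value theorem there is a choice $\varepsilon^{\ast}(t)$ (where for $t=1$ one takes $\varepsilon=\varepsilon(n)\to0$, and $\varepsilon^{\ast}(t)\to\tfrac23$ as $t\to\infty$) making $\Phi(\varepsilon^{\ast}(t),t)=\tfrac32$, i.e.\ expected capture time $(1.5-o(1))n$.

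It remains to handle $\kw{t}$, which I expect to be the main obstacle. By the birthday lemma the Komarov--Winkler cop is fed the uniform distribution on the observed set, which (conditioned on the same Chernoff-typical split) consists of $\approx(1-\varepsilon)t$ vertices in $B$ near $\bar c$ and $\approx\varepsilon t$ scattered bulk vertices, none of which reflects the true relative masses. The plan is to show that the resulting Komarov--Winkler trajectory patrols essentially the convex hull of the observed set in period $\Theta(n)$ while being caught by the true gambler only at rate $\Theta(\varepsilon^{\ast}(t)/n)$ per relevant turn, again yielding $(1.5-o(1))n$; a small adjustment of $b$ and $\varepsilon$ may be needed (or a separate but analogous distribution used for $\kw{t}$) so that the Komarov--Winkler value is pinned at $\tfrac32 n$ rather than merely bounded below by it. The difficulty is that the Komarov--Winkler algorithm is defined by a greedy/potential rule whose trajectory on an adversarially mis-estimated input distribution is not transparent, so this step requires carefully unwinding that algorithm's behaviour on $C_n$; by contrast the birthday lemma, the Chernoff split, the order-statistics estimate, and the intermediate-value tuning of $\varepsilon$ are all routine.
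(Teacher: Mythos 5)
Your construction is essentially the paper's: a thin high-probability arc at the antipode (your $B$ with $t^2\ll b\ll n$ matches the paper's $R$ of $\floor{t\sqrt n}$ vertices, which satisfies exactly those inequalities when $t=o(\sqrt n)$), a small mass $\epsilon$ spread over the bulk, a birthday bound forcing all $t$ observations to be distinct so that the tie-breaking/sample-uniformity kicks in, and a lower bound assembled from ``travel $\approx n/2$ on the typical event'' plus ``catastrophic camping at a low-probability bulk vertex on a rare event.'' However, you make the $\wmw{t}$ analysis much harder than it needs to be. The paper fixes $\epsilon_n = O(1/\sqrt n)\to 0$ and, for the rare event, uses only ``the \emph{first} observation lands outside $R$,'' which has probability $(1-o(1))\epsilon_n$; on that event (intersected with all-distinct) the closest mode is a bulk vertex of true probability $\epsilon_n/(n-r)$, so the wait costs $(n-r)/\epsilon_n$, and the product is $(1-o(1))n$ \emph{independently of} $\epsilon_n$. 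Added to the $(1-o(1))n/2$ travel cost from the disjoint typical event, this gives the $(1.5-o(1))n$ lower bound with no tuning. Your $\Phi(\epsilon,t)$/intermediate-value machinery is aimed at pinning the expected capture time at exactly $\tfrac32 n$, which is a stronger statement than the theorem asserts or the paper proves (the claim is a lower bound; indeed against this distribution $\wmw{t}$'s true expected capture time is closer to $(t+\tfrac12)n$ for $t\ge 2$, since \emph{any} of the $t$ observations landing in the bulk triggers the long wait). So that whole step can be deleted.

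The genuine gap is the one you flag yourself: $\kw{t}$. Your plan (``show the Komarov--Winkler trajectory patrols the convex hull of the observed set in period $\Theta(n)$\dots'') is not an argument, and as stated it would not obviously yield a $1.5n$ lower bound, since a patrol that passes through $B$ once per period is caught with constant probability per pass. The reason the paper can dispatch $\kw{t}$ with the same two inequalities is that its argument is deliberately strategy-agnostic: on the typical event every vertex with positive sample probability lies in $R$, at distance $n/2-o(n)$ from the start, and the gambler survives the walk there with probability $(1-\epsilon_n/(n-r))^{n/2}=1-o(1)$, so \emph{any} strategy that only targets positive-sample-probability vertices pays $(1-o(1))n/2$; on the rare event the nearest positive-sample-probability vertex is a bulk vertex of true probability $\epsilon_n/(n-r)$, to which the algorithm is routed first. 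To complete your proof you must either adopt this framing or actually unwind the Komarov--Winkler schedule on the mis-estimated uniform sample distribution; as written, half the theorem is unproved.
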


\begin{proof}
The same proof works for both $\wmw{t}$ and $\kw{t}$. 

Choose a sequence $\epsilon_n = O(\frac{1}{\sqrt{n}})$. If the cop starts at vertex $v$ on $C_n$, let $R$ be the set of the $r = \floor{t\sqrt{n}}$ or $\floor{t\sqrt{n}} + 1$ vertices farthest from $v$, where the choice of $r$ is determined by the parity of $n$. For $n$ sufficiently large so that $1-\epsilon_n$ and $n-r$ are positive, the gambler chooses the distribution with probabilities $\frac{1-\epsilon_n}{r}$ for each vertex in $R$ and $\frac{\epsilon_n}{n-r}$ for each remaining vertex.

Let $v_1, \dots, v_n$ be the vertices on the cycle and let $u_1, \dots, u_t$ be the observed vertices in order. Let $T$ be the random variable for the capture time. Let $X$ be the event that all observations are distinct, $Y$ the event that all observations are vertices in $R$, and $Z$ the event that the first observation $u_1$ is \emph{not} in $R$. 

Note that $$\pr[X \wedge Y] \geq \prod_{i = 1}^{t} \left(1 - \epsilon_n - \frac{i}{r}\right) \geq 1 - t\epsilon_n - \frac{t^2}{r} = 1 - o(1).$$ 

Moreover, as $$\pr[X \mid u_1 = v_i] \geq \prod_{j = 1}^{t} \left(1-\frac{j}{r}\right) \geq 1-\frac{t^2}{r} = 1-o(1)$$for all $i$, we get $\pr[X \wedge Z] \geq (1-o(1))\epsilon_n$ by summing over the vertices not in $R$. As $Y$ and $Z$ are disjoint events, the expected capture time is at least
\begin{align*}
  \e[T] &\geq \pr[X \wedge Y]\e[T \mid X \wedge Y] + \pr[X \wedge Z]\, \e[T \mid X \wedge Z] \\
  & \geq \left(1-o(1)\right)\left(1-\frac{\epsilon_n}{n-r}\right)^{\frac{n}{2}}\left(\frac{n}{2}-o(n)\right) + \left(1 - o(1)\right)\epsilon_n \, \frac{1}{\frac{\epsilon_n}{n-r}} = \left(1.5-o(1)\right)n.
\end{align*}
\end{proof}

The next three results show that when $t = \omega(n^2)$, $\kw{t}$ works almost as well against the $t$-observed gambler as the known gambler. 

\begin{lem}
    \label{lem:samplerror}
    Suppose that $p_1, \dots, p_n$ is a discrete probability distribution with $n$ outcomes that outputs $i$ with probability $p_i$. If $p'_i$ is the fraction of $W$ samples equal to $i,$ then for any $P\in(0,1)$
    $$\abs{p'_i - p_i} \le \sqrt {\frac{1/W}{1-P}}\,$$
    for every $ i\in \lbrace 1, \ldots, n \rbrace$ with probability at least $P.$
\end{lem}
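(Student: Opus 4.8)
The plan is to control the deviation $|p_i' - p_i|$ for a single index $i$ using a variance bound, then promote it to a statement that holds simultaneously for all $i$ via a union bound, and finally rephrase the failure probability in the form requested. First I would fix an index $i$ and observe that $W p_i'$ is a sum of $W$ independent Bernoulli$(p_i)$ random variables, so $\e[p_i'] = p_i$ and $\mathrm{Var}(p_i') = p_i(1-p_i)/W \le 1/(4W)$. By Chebyshev's inequality, $\pr\!\left[|p_i' - p_i| \ge \lambda\right] \le p_i(1-p_i)/(W\lambda^2)$ for any $\lambda > 0$.

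Next I would apply a union bound over the $n$ indices. The bound $p_i(1-p_i) \le 1/4$ alone would give an $n/(4W\lambda^2)$ tail, which has the wrong shape; the key trick is instead to use $\sum_{i=1}^n p_i(1-p_i) \le \sum_{i=1}^n p_i = 1$, so that
\begin{align*}
\pr\!\left[\exists\, i : |p_i' - p_i| \ge \lambda\right] \le \sum_{i=1}^n \frac{p_i(1-p_i)}{W\lambda^2} \le \frac{1}{W\lambda^2}.
\end{align*}
Thus the bad event has probability at most $1/(W\lambda^2)$ with no dependence on $n$, which is exactly what makes the clean $\sqrt{(1/W)/(1-P)}$ form possible.

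Finally I would solve for $\lambda$. Setting $1/(W\lambda^2) \le 1 - P$ gives $\lambda \ge \sqrt{(1/W)/(1-P)}$, so with $\lambda = \sqrt{(1/W)/(1-P)}$ the complementary event — that $|p_i' - p_i| < \lambda$, hence $\le \lambda$, for every $i$ — holds with probability at least $P$. This is precisely the claimed statement.

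The main obstacle, such as it is, is recognizing that one must sum the variances $\sum_i p_i(1-p_i) \le 1$ rather than bounding each term by $1/4$ before union-bounding; the naive route loses a factor of $n$ and cannot recover the $n$-free bound in the lemma. Beyond that, the argument is a routine second-moment computation, and I would just need to be careful that the inequality $|p_i' - p_i| \le \lambda$ (non-strict) in the statement is obtained from the strict inequality in the complement of Chebyshev's event, which is immediate.
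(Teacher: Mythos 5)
Your proof is correct and is essentially the paper's argument: both use per-coordinate Chebyshev together with a union bound whose total failure probability is controlled by $\sum_{i} p_i(1-p_i) \le \sum_i p_i = 1$, which is exactly how the $n$-free bound arises. The only difference is presentational — the paper allocates failure probability $p_i(1-P)$ to index $i$ and solves for the deviation, while you fix the deviation $\lambda$ and sum the resulting failure probabilities — and these are the same computation read in opposite directions.
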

\begin{proof}
    The variance in $p'_i$ is $\frac{p_i(1-p_i)}W.$ By Chebyshev's inequality, with probability at least $1-p_i(1-P),$ 
    \[
    \abs{p'_i-p_i}
    \le \sqrt{\frac{\frac{p_i(1-p_i)}W}{1 - \left(1-p_i(1-P)\right)}}
    = \sqrt{\frac{\frac{(1-p_i)}W}{1-P}}
    \le \sqrt{\frac{1/W}{1-P}}.
    \]
    
    By the intersection bound, we then have $\abs{p'_i - p_i} \le \sqrt {\frac{1/W}{1-P}}$ for all $i$, with probability at least $1 - \sum \limits_{i=1}^n p_i \left(1 - P\right) = 1 - (1-P) = P.$ 
\end{proof}

We note that the bound in the next theorem can be obtained by a similar proof to that of the upper bound of $n$ for the known gambler in \cite{gambler0}, by using the sample probabilities in place of the known probabilities and adjusting a few of the inequalities. However, we found that the proof is simpler if we use slightly different values in place of the known probabilities. Even though these values do not add to $1$, the proof still yields the bound below, following exactly the same steps as in \cite{gambler0}.

\begin{thm}\label{kwthmp}
    For any connected $n$-vertex graph and initial cop position, there exists a strategy against the $w n^2$-observed gambler with an expected capture time of at most
    $$\left(1- \sqrt{\frac{1/w}{1-P}}\right)^{-1} n,$$
    with probability at least $P$, for any $P < \frac{w-1}{w}.$
\end{thm}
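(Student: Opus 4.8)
The plan is to combine Lemma~\ref{lem:samplerror} with the Komarov--Winkler analysis, but to run that analysis not on the sample probabilities $p'_i$ directly but on the slightly smaller ``padded'' values $q_i = p_i - \delta$, where $\delta = \sqrt{\frac{1/w}{1-P}}$ is the error bound from Lemma~\ref{lem:samplerror} with $W = wn^2$. First I would invoke Lemma~\ref{lem:samplerror}: after observing $W = wn^2$ turns, with probability at least $P$ we have $\abs{p'_i - p_i} \le \delta$ for every $i$ simultaneously, which in particular gives $p'_i \ge p_i - \delta = q_i$ for all $i$. Note the hypothesis $P < \frac{w-1}{w}$ is exactly what makes $\delta = \sqrt{\frac{1/w}{1-P}} < 1$, so that $\left(1-\delta\right)^{-1}$ is a well-defined positive quantity; it also keeps the $q_i$ from being uniformly useless. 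Condition on this good event for the rest of the argument.

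Next I would recall the structure of the Komarov--Winkler strategy from \cite{gambler0}: the cop maintains a ``budget'' or potential associated to each vertex, moving toward whichever vertex currently has the largest accumulated weight, and the key inequality in their proof of the bound $n$ for the known gambler is a supermartingale-type estimate showing that the expected number of steps before capture, measured against the true distribution $p_i$, telescopes to $n$. The point of the remark preceding the theorem is that this same chain of inequalities goes through verbatim if we feed the algorithm the values $q_i$ in place of $p_i$ — the only place the $p_i$ summing to $1$ was used can be replaced by the weaker fact that $\sum_i q_i \le 1$, and every inequality that previously produced a factor of $1$ now produces a factor of $\sum_i q_i = 1 - n\delta \ge 1 - n\delta$; more relevantly, since the true per-turn probability of the gambler landing on the cop's target vertex is $p_i \ge q_i \ge q_i / (1-\delta) \cdot (1-\delta)$... the cleanest bookkeeping is: running KW on the $q_i$ gives an expected capture time of at most $n / \min(\text{relevant ratio})$, and because $p_i \ge q_i$ the true capture probabilities dominate the ones the algorithm ``plans for,'' so the capture time is at most $\left(1-\delta\right)^{-1} n$. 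I would present this by stating that replacing $p_i$ with $q_i = (1-\delta)\cdot\frac{q_i}{1-\delta}$ rescales the Komarov--Winkler bound of $n$ by the factor $\left(1-\delta\right)^{-1}$, and citing \cite{gambler0} for the detailed telescoping.

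The main obstacle is making precise exactly which inequality in the Komarov--Winkler proof absorbs the discrepancy, and confirming that using $q_i$ (rather than a properly normalized distribution) does not break any step that secretly relied on the weights being a genuine probability distribution. Concretely, their argument bounds $\e[T]$ by something like $\sum_i p_i \cdot (\text{cost}_i)$ where $\text{cost}_i$ is controlled using the fact that the cop visits vertex $i$ with adequate frequency relative to $p_i$; if we instead guarantee visit frequency relative to $q_i$ and then note $p_i \ge q_i$, the per-vertex cost is at most $1/q_i \ge 1/((1-\delta)p_i)$-type bound — wait, this needs care, since a \emph{larger} true probability only helps. I expect the honest resolution is: the KW strategy built from $q_i$ treats each vertex as if it had probability $q_i$, achieves the analogue of the ``$n$'' bound with $q_i$'s summing to $1-n\delta \le 1$, hence a bound of $n$ still holds against that fictitious distribution; and since the real gambler has $p_i \ge q_i$ componentwise, capture can only be faster, except that the cop's \emph{movement schedule} is tuned to the $q_i$ and a vertex with small $q_i$ but we... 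Actually the factor $(1-\delta)^{-1}$ must enter because $\sum q_i = 1 - n\delta$ could be as small as $1-n\delta$, which is \emph{not} close to $1/(1-\delta)$; so the correct scaling is $q_i \ge (1-\delta) p_i$ is \emph{false} in general. The right statement, and the one I would verify carefully, is that $p'_i \ge (1-\delta)$ times nothing useful — rather, one should run KW on $p'_i$ directly, use $p'_i \ge p_i - \delta$, and observe $p_i - \delta \ge (1-\delta)p_i$ holds iff $p_i \le 1$, which is always true; hence $p'_i \ge (1-\delta)p_i$, and feeding these into KW yields expected capture time at most $\left(1-\delta\right)^{-1}\sum_i p_i \cdot n = \left(1-\delta\right)^{-1} n$. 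That is the clean version I would write up, with the telescoping details deferred to \cite{gambler0} exactly as the preceding remark promises.
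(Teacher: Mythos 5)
Your overall plan --- invoke Lemma~\ref{lem:samplerror} with $W = wn^2$ and then rerun the Komarov--Winkler analysis on perturbed sample probabilities --- is the paper's approach, but the execution has two genuine errors and misses the one device that makes the bookkeeping work. First, the error scale is off by a factor of $n$: Lemma~\ref{lem:samplerror} with $W = wn^2$ gives $\abs{p'_i - p_i} \le \epsilon$ where $\epsilon = \sqrt{\frac{1/(wn^2)}{1-P}} = \frac{1}{n}\sqrt{\frac{1/w}{1-P}}$, not $\sqrt{\frac{1/w}{1-P}}$; the quantity $\sqrt{\frac{1/w}{1-P}}$ in the statement is $n\epsilon$, and the factor of $n$ enters only when the per-vertex errors are summed over all $n$ vertices. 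Second, your closing inequality is reversed: $p_i - \delta \ge (1-\delta)p_i$ rearranges to $\delta p_i \ge \delta$, i.e.\ $p_i \ge 1$, so it fails at every vertex of a non-degenerate distribution, and the conclusion $p'_i \ge (1-\delta)p_i$ does not follow. (Even with the correct $\epsilon$, the analogous inequality $p_i - \epsilon \ge (1-n\epsilon)p_i$ requires $p_i \ge 1/n$.)

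What the argument actually needs, and what the paper does, is to feed the Komarov--Winkler algorithm values that are guaranteed \emph{underestimates} of the true probabilities while still having total mass close to $1$. On the good event, set $q_i = \max(0, p'_i - \epsilon)$. Then $q_i \le p_i$ for every $i$, so every capture-probability inequality in the Komarov--Winkler proof remains valid against the true distribution; and since the sample frequencies satisfy $\sum_i p'_i = 1$ exactly, $\sum_i q_i \ge 1 - n\epsilon = 1 - \sqrt{\frac{1/w}{1-P}}$, whence the same proof yields expected capture time at most $\frac{n}{1-n\epsilon}$. Running the algorithm on the raw $p'_i$, as in your final version, does not work: $p'_i$ can exceed $p_i$ by $\epsilon$, which for a vertex with $p_i \ll \epsilon$ is an unbounded multiplicative overestimate, so the true capture rate there can be far smaller than the rate the schedule plans for. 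Your intermediate choice $q_i = p_i - \delta$ is also not available to the cop, who does not know the $p_i$.
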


\begin{proof}
    After $wn^2$ observations, let $p_i$ be the probability the gambler assigns to vertex $i,$ and let $p'_i$ be the fraction of observations for which the gambler appeared at $i$. By Lemma~\ref{lem:samplerror}, if $\epsilon = \sqrt{\frac{1/(wn^2)}{1-P}},$ we have $\abs{p'_i - p_i} \le \epsilon$ for all $i$, with probability at least $P.$
    
    Now the cop uses the algorithm from the proof of Lemma 2.1 in \cite{gambler0}, replacing the probabilities with those of the form $\max(0, p'_i - \epsilon)$. Following the same proof in \cite{gambler0}, with probability at least $P$ the expected capture time is at most $\dfrac{n}{1 - n\epsilon} = \left(1- \sqrt{\frac{1/w}{1-P}}\right)^{-1} n$.
\end{proof}

\begin{thm}
Suppose that $\beta(n)$ is a positive function such that $\lim_{n \rightarrow \infty} \beta(n) = \infty$ and let $t = t_n \geq n^2 \beta(n)^2$. For sufficiently large $n,$ there is a strategy to catch the $t$-observed gamber with expected capture time at most $\left(\left(1-\frac{1}{\beta(n)^{1/4}}\right)^{-1}+\frac{2}{\beta(n)^{1/2}}+\frac{4}{\beta(n)^{3/2}}\right) n$ on any connected $n$-vertex graph.
\end{thm}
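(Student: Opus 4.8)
The plan is to read the bound off Theorem~\ref{kwthmp} with a careful choice of parameters, then convert its ``with probability at least $P$'' guarantee into a genuine expectation bound by appending a $\wmw{1}$-style tail to the Komarov--Winkler strategy. First the cop reserves one of the $t$ observations, say the last, recording the vertex $u^{\ast}$ at which the gambler appeared on that turn; the other $t-1\ge n^{2}\beta(n)^{2}-1$ observations are used to form the sample probabilities $p'_{i}$. I would then apply Theorem~\ref{kwthmp} with $w=(t-1)/n^{2}$ and with $P$ chosen so that $\sqrt{(1/w)/(1-P)}=\beta(n)^{-1/4}$. For large $n$ this is admissible (as $\beta(n)^{1/2}>1$ forces $P<(w-1)/w$), the high-probability bound becomes exactly $\bigl(1-\beta(n)^{-1/4}\bigr)^{-1}n$, and the failure probability is $1-P=\beta(n)^{1/2}/w\le\beta(n)^{-3/2}\bigl(1+o(1)\bigr)$ by the hypothesis $t\ge n^{2}\beta(n)^{2}$.

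Now fix $M=\ceil{\beta(n)\,n}$ and let the cop run the adjusted Komarov--Winkler walk of Theorem~\ref{kwthmp} for the first $M$ turns; if the gambler is still free, the cop walks to $u^{\ast}$ and waits there. Write $T$ for the capture time and $T_{\mathrm{KW}}$ for the time at which the unmodified Komarov--Winkler walk would catch the gambler, so $T=T_{\mathrm{KW}}$ when $T_{\mathrm{KW}}\le M$, and otherwise $T$ equals $M$ plus the length of the tail. The point is that $u^{\ast}$ is an independent sample from the gambler's distribution, independent of the event $\{T_{\mathrm{KW}}>M\}$ and of the cop's location $c_{M}$ at turn $M$, because the game is played in darkness and so the Komarov--Winkler walk depends only on $p'$; hence, exactly as in the proof of Theorem~\ref{1-obs}, the conditional expected length of the tail is at most $\sum_{i:\,p_{i}>0}p_{i}\bigl(d(c_{M},v_{i})+1/p_{i}\bigr)\le (n-1)+\abs{\{i:\,p_{i}>0\}}<2n$, no matter how inaccurate the sample was.

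Combining, and writing $G$ for the event of Theorem~\ref{kwthmp} on which $\abs{p'_{i}-p_{i}}\le\epsilon$ for every $i$, we get
\[
  \e[T]\;\le\;\e\bigl[\min(T_{\mathrm{KW}},M)\bigr]+2n\cdot\pr[T_{\mathrm{KW}}>M].
\]
On $G$ we have $\min(T_{\mathrm{KW}},M)\le T_{\mathrm{KW}}$ with conditional expectation at most $\bigl(1-\beta(n)^{-1/4}\bigr)^{-1}n$, while off $G$ it is at most $M$, so $\e[\min(T_{\mathrm{KW}},M)]\le\bigl(1-\beta(n)^{-1/4}\bigr)^{-1}n+M(1-P)$; and $\pr[T_{\mathrm{KW}}>M]\le\pr[\,\overline{G}\,]+\pr[T_{\mathrm{KW}}>M\mid G]$, with the last term at most $\bigl(1-\beta(n)^{-1/4}\bigr)^{-1}n/M$ by Markov's inequality applied to the conditional expectation on $G$. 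Thus
\[
  \e[T]\;\le\;\bigl(1-\beta(n)^{-1/4}\bigr)^{-1}n\;+\;(1-P)(M+2n)\;+\;\frac{2\bigl(1-\beta(n)^{-1/4}\bigr)^{-1}n^{2}}{M},
\]
and substituting $M=\ceil{\beta(n)n}$ and $1-P\le\beta(n)^{-3/2}(1+o(1))$ makes the last two summands at most $\beta(n)^{-1/2}n+2\beta(n)^{-3/2}n+2\beta(n)^{-1}n+o\bigl(\beta(n)^{-1/2}n\bigr)$, which for all sufficiently large $n$ is below $2\beta(n)^{-1/2}n+4\beta(n)^{-3/2}n$, giving the theorem.

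The step I expect to be most delicate is the final balancing: $M$ must be chosen to trade off the cost of the Komarov--Winkler walk running its full $M$ turns on an inaccurate sample (of size $\approx M(1-P)$) against the cost of it failing to finish by turn $M$ even on an accurate sample and so invoking the tail (of size $\approx n^{2}/M$), and one then has to verify that $M\asymp\beta(n)n$ puts every lower-order contribution inside the stated error. The other point to get right is the independence argument for the tail --- the reserved observation must be provably independent of everything in the Komarov--Winkler phase --- which is precisely why the cop sets aside a separate observation instead of reusing the samples defining $p'_{i}$. (A floored variant of the Komarov--Winkler probabilities, or periodic spanning-tree traversals, would serve as alternative safeguards against the bad event, at the cost of messier constants.)
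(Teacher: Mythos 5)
Your proof is correct, and it shares the paper's two-phase skeleton --- run the sample-based Komarov--Winkler strategy of Theorem~\ref{kwthmp} for roughly $\beta(n)\,n$ turns, then fall back to a strategy with expected cost $O(n)$ --- but you implement both components differently. For the fallback, the paper switches to Komarov and Winkler's strategy for the \emph{unknown} gambler, quoting from \cite{gambler0} that the total expected capture time is then less than $x+2n$; you instead reserve one observation and append a $\wmw{1}$ tail, and your independence argument for the reserved observation (it is a fresh sample from $p$, independent of the samples driving the KW walk and of the in-game history up to turn $M$) together with the $(n-1)+\abs{S}<2n$ bound is sound and self-contained. For the probability of ever invoking the fallback, the paper argues that on the good sampling event the KW walk reaches a vertex of probability at least $\frac{1}{2n}$ within $n$ turns and thereafter survives each turn with probability at most $1-\frac{1}{2n}$, giving an exponentially small survival probability over $x-n$ turns; you instead apply Markov's inequality to $\e[T_{\mathrm{KW}}\mid G]\le\left(1-\beta(n)^{-1/4}\right)^{-1}n$, which avoids opening up the internals of the Komarov--Winkler algorithm at the cost of only a polynomially small term $\approx n/M\asymp\beta(n)^{-1}$ --- still comfortably inside the stated error since $\beta(n)^{-1}=o\!\left(\beta(n)^{-1/2}\right)$. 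Your parameter choices ($w=(t-1)/n^{2}$, $1-P=\beta(n)^{1/2}/w$, $M=\ceil{\beta(n)n}$) satisfy the hypotheses of Theorem~\ref{kwthmp} for large $n$, and the final balancing does land under $\left(2\beta(n)^{-1/2}+4\beta(n)^{-3/2}\right)n$, so your argument is a valid and somewhat more self-contained alternative to the paper's.
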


\begin{proof}
Define $x = n \beta(n)$. We modify the strategy from Theorem~\ref{kwthmp} as follows. We use the same strategy for the first $x$ turns. If, after turn $x$, the gambler has not been caught, we switch to Komarov and Winkler's strategy for catching the unknown gambler.

By applying Lemma~\ref{lem:samplerror}, we find that with probability at least $P = 1-\frac{1}{\beta(n)^{3/2}}$, the cop either catches the gambler or reaches a vertex $v_i$ with $p_i \geq \frac{1}{n}(1-\frac{1}{\beta(n)^{1/4}}) > \frac{1}{2n}$ within $n$ turns, for sufficiently large $n$. Thus with probability at least $P(1-(1-\frac{1}{2n})^{x-n}) \geq P(1-e^{-\beta(n)/2+1/2})$, the gambler will be caught using the strategy in Theorem~\ref{kwthmp} in at most $x$ turns.

Suppose the gambler is not caught in at most $x$ turns. By \cite{gambler0}, using the modified strategy, the expected capture time is less than $x + 2n$. Thus, for sufficiently large $n,$ the expected capture time for the modified strategy is at most $$\left(1-\frac{1}{\beta(n)^{1/4}}\right)^{-1}n+\frac{2}{\beta(n)^{3/2}}(x+2n) \leq \left(\left(1-\frac{1}{\beta(n)^{1/4}}\right)^{-1}+\frac{2}{\beta(n)^{1/2}}+\frac{4}{\beta(n)^{3/2}}\right) n.$$
\end{proof}

\section{Observed gambler versus $k$ cops} \label{s:dist}

In \cite{jg2}, Geneson defined an algorithm to partition a connected graph into sectors, all of whose sizes are within a factor of $2$ of each other. That paper used the algorithm in a strategy to pursue the unknown gambler with expected capture time at most $3.94n/k+O(1)$. We use the same sectoring algorithm for the results in this section, so it is presented below for completeness.

We say that a \textsl{branch} of a rooted tree is an induced rooted subtree formed by a vertex and one of its children's subtrees. A \textsl{limb} is a union of a vertex and some (possibly all or none) of its branches.

\begin{lem}\label{lem:sector}
If $T = (V,E)$ is a rooted tree with $\abs{V} = n$ and $x$ is an integer such that $2 \leq x < n$, then there exists a subset $S \subset V$ and a vertex $v \in S$ such that $x < \abs{S} \leq 2x-1$, $T|_{S}$ is a limb of $T$, and $T|_{(V-S) \cup \left\{v\right\} }$ is connected.
\end{lem}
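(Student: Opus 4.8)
The plan is to root $T$ and argue with subtree sizes. For a vertex $u$, let $s(u)$ denote the number of vertices in the subtree of $T$ rooted at $u$. Since $s(\rho) = n > x$ for the root $\rho$, the set of vertices $u$ with $s(u) > x$ is nonempty, so I would let $v$ be a vertex in this set with $s(v)$ as small as possible. Then every child $c$ of $v$ satisfies $s(c) \le x$, since otherwise $c$ would itself be a vertex with $s(c) > x$ and $s(c) < s(v)$; moreover $s(v) > x \ge 2$ forces $v$ to have at least one child, and if $c_1, \dots, c_k$ are its children then $\sum_i s(c_i) = s(v) - 1 \ge x$. The vertex $v$ will be the $v$ in the statement.

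Next I would build $S$ greedily. Order the children so that $s(c_1) \ge s(c_2) \ge \dots \ge s(c_k)$, start with $S = \{v\}$, and for $i = 1, \dots, k$ in turn adjoin the vertex set of the subtree rooted at $c_i$ to $S$ provided this keeps $|S| \le 2x - 1$. The bound $|S| \le 2x - 1$ then holds by construction. The crux is showing $x < |S|$. Suppose instead that $|S| \le x$. If some $c_i$ was skipped, then at that step the current $|S|$ plus $s(c_i)$ exceeded $2x - 1$, and since the current $|S|$ is at most the final $|S| \le x$, we get $s(c_i) > x - 1$, hence $s(c_i) = x$; because the children were processed in decreasing order of subtree size and all of them have size at most $x$, this forces $s(c_1) = x$ too. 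But at the first step $|S| + s(c_1) = 1 + x \le 2x - 1$ (here is where $x \ge 2$ is used), so $c_1$ is \emph{not} skipped and $|S| \ge x + 1$, a contradiction. If instead no child was skipped, then $S$ equals the whole subtree rooted at $v$, so $|S| = s(v) > x$, again a contradiction. Hence $x < |S| \le 2x - 1$.

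Finally I would verify the structural requirements, which are routine. By construction $S$ consists of $v$ together with the vertex sets of the subtrees rooted at a subset of its children, so $T|_S$ is a vertex together with some of its branches, i.e.\ a limb of $T$. Deleting the vertices of $S$ while keeping $v$ leaves $T$ with some pendant subtrees at $v$ pruned away, which is again a tree; equivalently $(V-S)\cup\{v\}$ induces a connected subgraph of $T$, and in the extreme case $S = V$ this induced subgraph is the single vertex $v$, still connected, so the conclusion holds (reading $\subset$ as $\subseteq$). The one genuinely delicate point in the whole argument is the lower bound $|S| > x$ in the greedy step: processing the children in decreasing order of subtree size, together with the hypothesis $x \ge 2$, is exactly what prevents the greedy from undershooting, while the cap $2x-1$ prevents it from overshooting.
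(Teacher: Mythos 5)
Your proof is correct. The paper argues by strong induction on $n$: starting at the root, it either recurses into the largest branch (when that branch has more than $2x-1$ vertices), takes that branch as $S$ outright (when its size already lies in $(x,2x-1]$), or greedily merges branches at the root when all of them are small. You replace the induction with a direct extremal choice: the vertex $v$ minimizing $s(v)$ subject to $s(v)>x$, which automatically has all children's subtrees of size at most $x$, and then you run essentially the same greedy merge. The two greedy steps are phrased dually --- the paper adds branches until the total first exceeds $x$ and notes it cannot overshoot $2x-1$ because each addition contributes at most $x-1$ new vertices, while you cap additions at $2x-1$ and argue the result cannot undershoot $x$ --- but they are equivalent in substance, and your case analysis for the lower bound (a skipped child forces $s(c_1)=x$, which would have been accepted at the first step since $1+x\le 2x-1$) is sound. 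Your version is arguably cleaner in that it needs no induction and isolates exactly where $x\ge 2$ is used; the paper's formulation may stop higher in the tree (its middle case accepts a branch of size up to $2x-1$ without descending further), though both produce valid decompositions. Your reading of $S\subset V$ as allowing $S=V$ matches the paper's own base case.
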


\begin{proof}
We use strong induction on $n.$ For the base case, if $x < n \leq 2x-1$, $\forall v \in V,\, S = V$ works. For $n > 2x-1$, the inductive hypothesis is that the lemma is true for all $m$ less than $n$.

Suppose that $u$ is the root of $T$, and let $B$ be a branch of $T$ rooted at $u$ with a maximum number of vertices among all branches rooted at $u$. If $\abs{B} > 2x-1$, the inductive hypothesis on $B - \left\{u \right\}$ gives values for $S$ and $v$. If $x < \abs{B} \leq 2x-1$, then set $S = B$ and $v = u$.

If $\abs{B} \leq x$, set $C = B$ and add all vertices to $C$ from each next largest branch rooted at $u$ until $C$ has more than $x$ vertices. Since $B$ has the maximum number of vertices among all branches rooted at $u,$ at most $\abs{B}-1 \le x-1$ vertices are added at a time, so this process results in $x < \abs{C} \leq 2x-1$. Set $S = C$ and $v = u$.
\end{proof}

\begin{minipage}{0.66\textwidth}
	\begin{remark}
		In the preceding lemma, $2x-1$ is tight. In the tree depicted in Figure~\ref{fig:trees}, there is no such $S, v \in S$ where $x < \abs{S} < 2x-1.$
	\end{remark}
	We now show how to cover a graph's vertex set with connected subgraphs of small size.
\end{minipage}
\begin{minipage}{0.33\textwidth}
	\begin{figure}[H]
		\centering
		\begin{asy}
			unitsize(0.667inch);
			
			pair[] childPositions (int numChildren, pair offset=(0,0), real length=0.5, real separation=pi/15)
			{
			  pair[] positions;
			  for (int i=0; i<numChildren; ++i)
			    {
			      positions.push(offset + length*expi(3*pi/2 + separation*(i-(numChildren-1)/2.0)));
			    }
			  return positions;
			}
			
			pair[] vertices;
			pair O = (0,0);
			
			vertices.push(O);
			dot(O);
			
			pair[] firstgen = { (-1,-0.5), (0,-0.5), (1,-0.5) };
			vertices.append(firstgen);
			
			for (int i=0; i<firstgen.length; ++i)
			  {
			    draw(O -- firstgen[i]);
			    dot(firstgen[i]);
			    pair[] secondgen = childPositions(5, offset=firstgen[i], length=0.5, separation=pi/15);
			    for (int j=0; j<secondgen.length; ++j)
			      {
			        draw((firstgen[i] -- secondgen[j]));
			        dot(secondgen[j]);
			      }
			    label("$\ensuremath{\underbrace{}_{}}$", firstgen[i]+(0,-2/3));
			    label("\scalebox{0.8}{$x-2$}", firstgen[i]+(0,-3/4));
			  }
		\end{asy}
		\caption{}
		\label{fig:trees}
	\end{figure}
\end{minipage}

\begin{lem}\label{lem:treecov}
For any graph $G=(V,E)$ with $\abs{V}=n,$ there exists a covering of $V$ with $\le k$ connected subgraphs of size $\le 2\floor{\frac n{k+1}}+1.$
\end{lem}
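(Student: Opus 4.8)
The plan is to reduce to spanning trees and induct on $k$, keeping $n$ free. Since $G$ is connected, fix a spanning tree $T$ of $G$ and root it; every connected subgraph of $T$ is a connected subgraph of $G$, so it suffices to prove the following: for every rooted tree on $n\ge 1$ vertices and every $k\ge 1$, the vertex set can be covered by at most $k$ connected subgraphs, each with at most $2\floor{\frac{n}{k+1}}+1$ vertices.

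The base case $k=1$ holds because $2\floor{n/2}+1\ge n$, so the whole tree is one admissible piece. For $k\ge 2$ I first dispose of two degenerate regimes: if $n\le 2\floor{\frac{n}{k+1}}+1$ then again the whole tree works; and if $n\le k+1$ then the vertices can be covered by at most $k$ singletons together with at most one edge of $T$, all of size at most $2\le 2\floor{\frac{n}{k+1}}+1$ (one checks the sub-cases $n\le k$ and $n=k+1$ separately). So assume $n\ge k+2$ and $n>2\floor{\frac{n}{k+1}}+1$.

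Now set $x=\ceil{\frac{n}{k+1}}$; then $2\le x<n$, so Lemma~\ref{lem:sector} applied to $T$ returns $S\subset V$ and $v\in S$ with $x<\abs{S}\le 2x-1$, where $T|_{S}$ is a limb and $T':=T|_{(V-S)\cup\{v\}}$ is a tree on $m:=n-\abs{S}+1$ vertices that still contains the root of $T$, hence inherits a root. Since $\abs{S}>x$ and $x=\ceil{\frac{n}{k+1}}$, we have $\abs{S}\le 2\ceil{\frac{n}{k+1}}-1\le 2\floor{\frac{n}{k+1}}+1$, so $T|_{S}$ is within budget. Since $\abs{S}\ge x+1\ge\frac{n}{k+1}+1$, we get $m\le\frac{k}{k+1}n$, hence $m/k\le n/(k+1)$ and therefore $\floor{m/k}\le\floor{\frac{n}{k+1}}$. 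Applying the inductive hypothesis to $T'$ with parameter $k-1$ covers $(V-S)\cup\{v\}$ by at most $k-1$ connected subgraphs of size at most $2\floor{m/k}+1\le 2\floor{\frac{n}{k+1}}+1$; together with $T|_{S}$ these at most $k$ pieces cover $S\cup(V-S)=V$, all of the required size.

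I expect the one real obstacle to be the off-by-one in the choice of $x$. Taking $x=\floor{\frac{n}{k+1}}$ instead fails: when $n\equiv -1\pmod{k+1}$ and Lemma~\ref{lem:sector} returns the minimum $\abs{S}=x+1$, the leftover tree has $\floor{m/k}=\floor{\frac{n}{k+1}}+1$, making the recursive pieces one vertex too large. Rounding up to $x=\ceil{\frac{n}{k+1}}$ repairs the recursion inequality $\floor{m/k}\le\floor{\frac{n}{k+1}}$, while the inequality $2\ceil{\frac{n}{k+1}}-1\le 2\floor{\frac{n}{k+1}}+1$ simultaneously keeps the peeled limb within budget. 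The remaining verifications — that $m\ge 1$, and that the two degenerate regimes really exhaust the case $n<k+2$ — are routine.
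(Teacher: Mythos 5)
Your proof is correct and follows essentially the same route as the paper: peel off one sector via Lemma~\ref{lem:sector} and recurse with $k-1$ on the remainder. The only cosmetic differences are your choice $x=\ceil{\frac{n}{k+1}}$ where the paper takes $x=\floor{\frac{n}{k+1}}+1$ (which makes $2x-1$ equal the size budget exactly and sidesteps your rounding discussion), and slightly different bookkeeping of the base/degenerate cases.
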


\begin{proof}
We use strong induction on $(k,n)$ (with the lexicographic order).
%(any linear ordering of $\mathbb N^2$ under the componentwise order will do). 
For the base case, when $k=1,$ $n \le 2 \floor{\frac{n}{2}} + 1 = 2\floor{\frac n{k+1}}+1$ so the graph itself is a cover, and when $n \le k,$ $2\floor{\frac n{k+1}}+1 = 2 \cdot 0 + 1 = 1$ so the set of vertices is a cover.

For the inductive step, assume the statement is true for all $k'<k, n'<n,$ with $1 < k < n.$ We want to show that the statement is true for $(k,n).$ 

As $n>k,$ $\floor{\frac n{k+1}}+1 \ge 1+1 = 2.$ Let $T$ be a rooted spanning tree of $G.$ Apply Lemma~\ref{lem:sector} with $x=\floor{\frac n{k+1}}+1<n$ on $T$ to get $S$ and $v.$ We have
\begin{align*}
\abs{S} \le 2x-1 &= 2\left( \floor{\frac n{k+1}}+1 \right) - 1
= 2\floor{\frac n{k+1}}+1.
\intertext{$G|_{S}$ will be one of the subgraphs in our covering. We now apply the inductive hypothesis on $G' = (V', E') = G|_{(V-S) \cup \left\{v\right\}}$ and $k' = k-1$ to get the rest of the subgraphs. We have}
n' = \abs{V'} &= \abs{(V-S) \cup \left\{v\right\}} = \abs{V} - \abs{S} + 1 \\
&< n - x + 1 = n - \left( \floor{\frac n{k+1}}+1\right) + 1 \\
&= n - \floor{\frac n{k+1}} = \ceil{\frac {kn}{k+1}},
\intertext{so $n' < n.$ Finally,}
2\floor{\frac {n'}{k'+1}}+1 &\le 2\floor{\frac {\ceil{\frac {kn}{k+1}} - 1}{(k-1)+1}}+1 \le 2\floor{\frac {\frac {kn}{k+1}}k}+1\\
&= 2\floor{\frac n{k+1}}+1,
\end{align*}
as desired.

\end{proof}

The next two bounds also apply to the known gambler, since the cops can simulate the observation for the $1$-observed gambler if they are facing the known gambler.

\begin{thm}
A distributed version of $\wmw{1}$ can be used by $k$ cops against the $1$-observed gambler on any connected $n$-vertex graph $G$ to achieve expected capture time at most $\frac{3n}{k+1} + 1$.
\end{thm}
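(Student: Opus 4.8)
The plan is to combine the sectoring lemma (Lemma~\ref{lem:treecov}) with the single-observation idea from Theorem~\ref{1-obs}. First I would apply Lemma~\ref{lem:treecov} to $G$ with the given $k$ to obtain a covering of $V(G)$ by $m \le k$ connected subgraphs $H_1, \dots, H_m$, each of size at most $2\floor{\frac{n}{k+1}}+1$. Assign one cop to each $H_j$ (if $m < k$, the spare cops can be ignored or doubled up harmlessly), and let cop $j$ start at a center vertex $c_j$ of $H_j$, so every vertex of $H_j$ is within distance $r_j \le \floor{\frac{|H_j|-1}{2}} \le \floor{\frac{n}{k+1}}$ of $c_j$ inside $H_j$ — note that $H_j$ being connected is exactly what lets the cop move within it.

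Next I would run the distributed analogue of Watch-Move-Wait: all cops watch the single pre-game observation, see which vertex $v_\ell$ the gambler appeared at, and then the unique cop $j$ whose sector $H_j$ contains $v_\ell$ walks to $v_\ell$ (taking at most $r_j \le \floor{\frac{n}{k+1}}$ turns since $v_\ell \in H_j$ and $H_j$ is connected) and waits there; the other cops are irrelevant to the capture in this scenario. As in the proof of Theorem~\ref{1-obs}, conditioned on the gambler having appeared at a positive-probability vertex $v_i$, the expected capture time is at most $r_{j(i)} + \frac{1}{p_i}$ where $j(i)$ is the sector index of $v_i$. Summing over the set $S$ of positive-probability indices with weights $p_i$ gives expected capture time at most
\begin{align*}
\sum_{i \in S} p_i\left(r_{j(i)} + \frac{1}{p_i}\right) &\le \floor{\frac{n}{k+1}} + |S| \le \floor{\frac{n}{k+1}} + n.
\end{align*}

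This bound of $\floor{\frac{n}{k+1}} + n$ is not yet $\frac{3n}{k+1}+1$, so the last step is the sharpening: instead of bounding $|S| \le n$ crudely, I would observe that the vertices with positive probability that lie in sector $H_j$ number at most $|H_j| \le 2\floor{\frac{n}{k+1}}+1$, and the expected "wait" contribution from those vertices, $\sum_{i \in S \cap H_j} p_i \cdot \frac{1}{p_i} = |S \cap H_j|$, is at most $|H_j|$. But more to the point, I should bound the \emph{movement} term sector-by-sector: within sector $H_j$, the cop's distance to $v_i$ is at most the radius of $H_j$ measured from $c_j$, which is at most $\floor{\frac{|H_j|-1}{2}}$, and $|H_j| \le 2\floor{\frac{n}{k+1}}+1$ gives radius at most $\floor{\frac{n}{k+1}}$. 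Then $\e[T] \le \floor{\frac{n}{k+1}} + \sum_j |S \cap H_j| \le \floor{\frac{n}{k+1}} + \sum_j |H_j|$; since the $H_j$ cover $V$ with $m \le k$ pieces each of size $\le 2\floor{\frac{n}{k+1}}+1$ and the total is $\ge n$, the right accounting (charging each of the $n$ vertices once, plus the "$+1$" overlaps at the $v$'s glued on in the covering) should yield $\sum_j |H_j| \le n + (k-1)$ is too weak — so instead one keeps the per-sector bound $r_j + |S \cap H_j| \le \floor{\frac{n}{k+1}} + (2\floor{\frac{n}{k+1}}+1) = 3\floor{\frac{n}{k+1}}+1$ as a bound valid \emph{whichever} sector the gambler lands in, hence $\e[T] \le 3\floor{\frac{n}{k+1}}+1 \le \frac{3n}{k+1}+1$.

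The main obstacle is this final accounting: one must be careful that the cop handling sector $H_j$ really does capture in expected time at most $3\floor{\frac{n}{k+1}}+1$ \emph{regardless} of which sector the gambler's observed vertex falls in — the bound $r_j + \frac{1}{p_i}$ is conditional on $v_i$, and after un-conditioning one gets $r_j + |S \cap H_j|$, and both $r_j \le \floor{\frac{n}{k+1}}$ and $|S \cap H_j| \le |H_j| \le 2\floor{\frac{n}{k+1}}+1$ hold simultaneously, so the sum is at most $3\floor{\frac{n}{k+1}}+1$; taking the overall expectation over which sector is hit only averages quantities each bounded by $\frac{3n}{k+1}+1$. I would also need to double-check the edge cases $k \ge n$ (each sector a single vertex, radius $0$) and the spare-cop case $m < k$, but these are routine.
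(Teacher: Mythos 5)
Your setup (sectoring via Lemma~\ref{lem:treecov}, each cop at a center of her sector at distance at most $\floor{\frac{n}{k+1}}$ from every vertex of it) matches the paper, but the capture phase you describe --- only the cop whose sector contains the observed vertex moves there, the rest being ``irrelevant'' --- does not give the claimed bound, and the ``sharpening'' step contains a genuine error. With your strategy the unconditional expected capture time is
\[
\sum_{i\in S} p_i\left(r_{j(i)}+\frac{1}{p_i}\right)\;=\;\sum_{i\in S}p_i\,r_{j(i)}+\abs{S}\;\le\;\floor{\frac{n}{k+1}}+\abs{S},
\]
and $\abs{S}$ can be as large as $n$; your first computation already showed this. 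The attempted repair asserts that, conditioned on the observed vertex lying in sector $H_j$, the expected capture time is $r_j+\abs{S\cap H_j}$. That is false: writing $P_j=\sum_{i\in S\cap H_j}p_i$, the conditional expectation is $r_j+\abs{S\cap H_j}/P_j$, which is unbounded as $P_j\to 0$. The quantity you computed, $\sum_{i\in S\cap H_j}p_i\cdot\frac{1}{p_i}=\abs{S\cap H_j}$, is the \emph{contribution of sector $j$ to the total expectation}, not a conditional expectation; summing these contributions over $j$ just returns $\abs{S}\le n$ again. Concretely, against the uniform gambler your strategy parks only one cop at a positive-probability vertex, so the per-turn capture probability is $1/n$ and the expected capture time is about $n$ --- a factor of roughly $k$ worse than the claimed $\frac{3n}{k+1}+1$.

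The missing idea is that \emph{all} $k$ cops must move in parallel every game. The paper colors the at most $2\floor{\frac{n}{k+1}}+1$ vertices of each sector with colors $c_1,\dots,c_{2\floor{n/(k+1)}+1}$ (distinct colors within a sector); when the gambler is observed at a vertex of color $c_i$, \emph{every} cop walks to the color-$c_i$ vertex of her own sector (or stays put if there is none) and waits. Then the per-turn capture probability is $q_i$, the total probability mass of \emph{all} vertices colored $c_i$ across all sectors, and the expectation is a sum over color classes rather than over vertices, so the number of nonzero terms is at most $2\floor{\frac{n}{k+1}}+1$ and the bound $\floor{\frac{n}{k+1}}+2\floor{\frac{n}{k+1}}+1\le\frac{3n}{k+1}+1$ follows.
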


\begin{proof}
Assign every subgraph (sector) given by Lemma~\ref{lem:treecov} to the first few cops, and ignore all unassigned cops. Each cop should start the center of their subgraph, so they are distance at most $\floor{\frac{n}{k+1}}$ from any vertex in their sector.

Label the vertices in each sector arbitrarily with colors $c_1, \dots, c_{2\floor{\frac{n}{k+1}}+1}$, where some colors are not necessarily used in all sectors, and all vertices in a single sector have different colors. (Shared vertices between sectors might be given multiple colors.) Let the gambler appear at a vertex with the color $c_i$ on their observed move. Every cop will move to the vertex with color $c_i$ corresponding to their sectors (or they will stay put if there is no such vertex).

Let $q_{i}$ be the probability that the gambler is observed at a vertex labeled $c_i$, and define $S = \left\{i: 1\leq i \leq 2 \floor{\frac{n}{k+1}} + 1 \wedge q_i > 0 \right\}$. Then the expected capture time is at most 
$$\sum_{i \in S} q_{i} \left(\floor{\frac{n}{k+1}} + \frac{1}{q_i}\right) = \floor{\frac{n}{k+1}} + \abs{S} \leq 3 \floor{\frac{n}{k+1}} + 1.$$
\end{proof}

\begin{thm}\label{diststar}
A distributed version of $\wmw{1}$ can be used by $k$ cops against the $1$-observed gambler on the $n$-vertex star to achieve expected capture time at most $\frac{n-1}{k}+3$.
\end{thm}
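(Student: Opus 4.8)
The plan is to mimic the distributed $\wmw{1}$ strategy of the preceding theorem, exploiting the fact that on a star every ``sector'' can be taken to contain the center $c$, which is at distance $1$ from every leaf. Partition the $n-1$ leaves into $k$ groups $L_1,\dots,L_k$ with $\abs{L_j}\le\ceil{(n-1)/k}$ for all $j$, and assign $L_j$ to cop $j$. All $k$ cops start at $c$. Index the leaves of $L_j$ as $\ell_{j,1},\dots,\ell_{j,\abs{L_j}}$, give $\ell_{j,m}$ the colour $m$, and give $c$ the colour $0$; only $1+\ceil{(n-1)/k}$ colours appear.

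After watching the gambler for one turn, every cop acts on the colour of the observed vertex. If the gambler is seen at $c$, all cops stay at $c$. If it is seen at the colour-$m$ leaf of its (unique) group, then each cop $j$ with $\abs{L_j}\ge m$ moves to its own colour-$m$ leaf $\ell_{j,m}$ (one turn, after the forced first turn at $c$), while the remaining cops stay at $c$. From then on every cop waits at its destination. In particular the cop owning the gambler's observed leaf moves exactly there.

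For the analysis, set $q_0=p_0$ and, for $m\ge 1$, let $q_m=\sum_{j:\,\abs{L_j}\ge m}p_{\ell_{j,m}}$ be the probability that the gambler is observed at some colour-$m$ vertex; since the $L_j$ partition the leaves, $\sum_i q_i=1$. The key observation is that, conditioned on the gambler being observed at a colour-$m$ vertex, its positions on all later turns are i.i.d.\ from its distribution and independent of the observation, while after the single move the active cops jointly occupy \emph{every} colour-$m$ vertex; hence the gambler is caught on the first later turn it lands on a colour-$m$ vertex, an event of probability $q_m$ per turn, so the conditional expected capture time is at most $1+1/q_m$ (the $1$ bounding the move, as in Theorem~\ref{1-obs}). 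Summing as in the previous two proofs,
$$\e[T]\ \le\ \sum_{i:\,q_i>0}q_i\!\left(1+\frac1{q_i}\right)\ =\ 1+\bigl|\{i:q_i>0\}\bigr|\ \le\ 2+\ceil{\tfrac{n-1}{k}}\ \le\ \frac{n-1}{k}+3 .$$

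The step needing the most care is the colour-merging: one must check that once all cops move to their respective colour-$m$ vertices the effective per-turn capture probability in that case is the pooled value $q_m$ rather than any individual $p_{\ell_{j,m}}$, which is precisely where independence of the gambler's future moves from the lone observation enters. The rest — balancing the partition, bounding each move by the radius $1$, and accounting for the constant $3$ as $1$ for the move, $1$ for the colour of $c$, and $1$ for the rounding in $\ceil{(n-1)/k}$ — is routine.
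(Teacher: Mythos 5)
Your proof is correct and follows essentially the same approach as the paper's: partition the leaves into $k$ balanced groups, color within each group, have every cop move to its group's vertex of the observed color, and sum $q_i(1+1/q_i)$ over the colors. The only cosmetic difference is that you give the center its own color $0$ while the paper assigns the center the last color in every sector; the accounting and the resulting bound $\ceil{\frac{n-1}{k}}+2 < \frac{n-1}{k}+3$ are identical.
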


\begin{proof}
All cops start at the center of the star. Partition the non-center vertices of the star into $k$ sectors of size at most $\ceil{\frac{n-1}{k}}+1$. As in the last proof, label the vertices in each sector with colors $c_1, \dots, c_{\ceil{\frac{n-1}{k}}+1}$, where all vertices in a single sector have different colors. The center of every sector should be colored $c_{\ceil{\frac{n-1}{k}}+1}.$

If the gambler is observed at a vertex with the color $c_i$, every cop will move to the vertices with color $c_i$ in their sectors (or they will stay put if there is no such vertex). If $q_{i}$ is the probability that the gambler appears at a vertex labeled $c_i$ and $S = \left\{i: 1\leq i \leq \ceil{\frac{n-1}{k}}+1 \wedge q_i > 0 \right\}$, then the expected capture time is at most
$$\sum \limits_{i \in S} q_{i} \left(1+\frac{1}{q_i}\right) = \abs{S}+1 \leq \ceil{\frac{n-1}{k}} + 2.$$
\end{proof}

\section{Throttling results} \label{s:throttling}
We define the following three versions of throttling number for the cop versus gambler game:

\begin{definition}
    The \emph{known gambler throttling number} of a graph $G,$ hereafter denoted $\kt G,$ is the minimum possible value of $k + T_{k}$ where $T_k$ is the minimum expected capture time of a known gambler that $k$ cops can guarantee.
\end{definition}

\begin{definition}
    The \emph{unknown gambler throttling number} of a graph $G,$ hereafter denoted $\ut G,$ is the minimum possible value of $k + U_k,$ where $U_k$ is the minimum expected capture time of an unknown gambler that $k$ cops can guarantee.
\end{definition}

\begin{definition}
    The $t$-\emph{observed gambler throttling number} of a graph $G,$ hereafter denoted $\ot[t]{G},$ is the minimum possible value of $k + O_{t,k}$ where $O_{t,k}$ is the minimum expected capture time of a $t$-observed gambler that $k$ cops can guarantee.
\end{definition}

We make two remarks that follow from the definitions before proving the main results in this section.

\begin{remark}
    Given a graph $G,$ $\ut G \ge \ot[s]{G} \ge \ot[t]{G} \ge \kt G$ for $s < t.$
    
    %The unknown gambler throttling number of a graph $G$ is greater than or equal to the known gambler throttling number of $G.$
\end{remark}

\begin{remark}
    For graphs $H \subseteq G$ on the same set of vertices, $\ut H \ge \ut G,~ \kt H \ge \kt G,$ and $\ot[t]{H} \ge \ot[t]{G}.$ 
    %The (un)known gambler throttling number of a graph $H \subseteq G$ is greater than or equal to that of $G.$
\end{remark}

\begin{lem} \label{thlower}
    For any graph $G$ on $n$ vertices, $\ut G \ge \ot[t]{G} \ge \kt G \ge 2 \sqrt n.$
    %The (un)known gambler throttling number of a graph $G$ on $n$ vertices is at least $2 \sqrt n.$
\end{lem}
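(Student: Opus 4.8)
The plan is to bound $\kt G$ from below, since $\kt G$ is the smallest of the three throttling numbers and the other two inequalities are already recorded in the first remark of this section. So it suffices to show that for any number $k$ of cops, $k + T_k \ge 2\sqrt n$, where $T_k$ is the best expected capture time $k$ cops can force against a known gambler on $G$. The natural adversary strategy is the uniform distribution: the gambler picks each of the $n$ vertices with probability $1/n$. Against the uniform gambler, at each turn every cop occupies a single vertex, so the $k$ cops collectively cover at most $k$ vertices, and hence the probability that the gambler lands on an occupied vertex in any given turn is at most $k/n$, independently of the history (the gambler's moves are i.i.d. and the cops' positions at the start of a turn are determined before the gambler moves).

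First I would make this independence precise: let $A_j$ be the event that the gambler is caught on turn $j$ but not earlier. Since the cops' positions on turn $j$ are fixed before the gambler's $j$-th move, $\pr[\text{caught on turn } j \mid \text{not caught before}] \le k/n$, so the capture time $T$ stochastically dominates a geometric random variable with success probability $k/n$, giving $\e[T] \ge n/k$ — wait, that only gives $T_k \ge n/k$, hence $k + T_k \ge k + n/k$, and by AM–GM $k + n/k \ge 2\sqrt n$. That is exactly the bound we want. So the argument is: (1) reduce to bounding $\kt G$; (2) fix the uniform gambler; (3) observe that at most $k$ vertices are covered each turn and the gambler's position is independent of the cops' current configuration; (4) conclude $T_k \ge n/k$ via the geometric domination or directly via $\e[T] = \sum_{j\ge 1}\pr[T \ge j] \ge \sum_{j \ge 1}(1-k/n)^{j-1} = n/k$; (5) apply AM–GM to $k + n/k$.

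One small subtlety to handle carefully: the inequality $\pr[\text{caught on turn }j\mid\text{survived to turn }j] \le k/n$ requires that the cops have no information distinguishing vertices — but against the \emph{known} gambler the cops do know the distribution, and for the uniform distribution all vertices look alike, so covering any $k$ of them is optimal and the bound $k/n$ holds regardless of how the cops move. I should state this as: conditioned on any history of the gambler's moves consistent with non-capture through turn $j-1$, the cops' turn-$j$ positions $C_j$ form a set of size at most $k$, and the gambler's turn-$j$ vertex is uniform and independent of $C_j$, so the conditional capture probability is $|C_j|/n \le k/n$. Summing the survival probabilities then yields $\e[T] \ge \sum_{j=1}^\infty (1 - k/n)^{j-1} = n/k$.

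The main obstacle, such as it is, is purely expository rather than mathematical: being careful that the conditioning argument is valid (the cops may randomize and may adapt, but their position at the moment the gambler moves is still a set of at most $k$ vertices chosen without any usable information about where the uniform gambler will land). Once that is pinned down, everything else — the geometric sum and the AM–GM step $k + n/k \ge 2\sqrt{n}$ — is routine. I would also note that the same uniform-gambler argument works verbatim for the unknown and observed gamblers, but that is already subsumed by the remark $\ut G \ge \ot[t]{G} \ge \kt G$, so only the $\kt G$ bound needs proof.
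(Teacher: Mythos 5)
Your argument is correct and matches the paper's proof: the paper likewise fixes the uniform gambler, invokes the bound $T_k \ge n/k$ (citing Lemma 3 of the earlier Geneson paper rather than rederiving it), and concludes with $k + n/k \ge 2\sqrt{n}$, relying on the section's first remark for the chain $\ut{G} \ge \ot[t]{G} \ge \kt{G}$. Your spelled-out justification of the $n/k$ lower bound via the geometric survival sum is just a self-contained version of the cited lemma.
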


\begin{proof}
    As mentioned by Geneson in Lemma 3 of \cite{jg2}, if the gambler uses the uniform distribution on a connected $n$-vertex graph, then the expected capture time is at least $n/k.$ Therefore, the throttling number is at least the minimum value of $\dfrac n k + k$ over all positive integer values for $k,$ which in turn is at least $2 \sqrt n,$ as desired.
\end{proof}

\begin{lem}
   On a graph $G$ with $n$ vertices, $\kt G \le \ot[t]G \le \ut G < 3.96944\sqrt n$ for sufficiently large $n.$
   %The unknown gambler throttling number of a graph $G$ on $n$ vertices for $n \ge 10^{11}$ is less than $3.96944\sqrt n.$
\end{lem}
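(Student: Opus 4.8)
The plan is to exhibit an explicit strategy for $k$ cops against the unknown gambler that achieves expected capture time $O(\sqrt{n})$, and then optimize the constant in $k + U_k$. Since $\ut G \le$ the capture time plus $k$, and by the remark $\kt G \le \ot[t]{G} \le \ut G$, it suffices to bound $\ut G$ from above. The natural approach is to combine the sectoring machinery (Lemma~\ref{lem:treecov}) with the best known unknown-gambler pursuit strategy on each sector. Concretely, I would pick $k \approx c\sqrt n$ cops, partition $V(G)$ into $\le k$ connected subgraphs each of size $\le 2\floor{\frac{n}{k+1}} + 1 = O(\sqrt n / c)$ via Lemma~\ref{lem:treecov}, assign one cop per sector, and have each cop run the Komarov--Winkler unknown-gambler algorithm \emph{within its own sector}, treating the induced (conditional) gambler distribution on that sector as the distribution to pursue.

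The key steps, in order: (1) Fix $k$ and apply Lemma~\ref{lem:treecov} to get sectors $S_1, \dots, S_k$ with $|S_j| \le 2\floor{\frac{n}{k+1}}+1 =: m$. (2) Observe that each sector is a connected subgraph, so a single cop running the $1.95m$-type unknown-gambler strategy on $S_j$ catches the gambler, \emph{conditioned on the gambler being in $S_j$ that turn}, in expected time at most roughly $1.95 m$ — more carefully, one argues that if the gambler places total probability mass $\sigma_j$ on $S_j$, the cop in sector $S_j$ catches it in expected time $O(m/\sigma_j)$-ish, but since $\sum_j \sigma_j = 1$ and sectors may overlap only at cut vertices, the waiting-time analysis of \cite{jg1} carried out per sector shows the global expected capture time is $O(m) = O(n/k)$. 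Actually the cleanest route is: the union of the $k$ cops' positions sweeps, and the known bound of $1.95n'$ on a connected $n'$-vertex graph applied with $n' = m$ gives expected capture time at most $1.95 m + O(1) \le 1.95 \cdot \frac{2n}{k+1} + O(1)$ per the Geneson-style argument in \cite{jg2}. (3) Then the unknown throttling number is at most $k + 1.95 \cdot \frac{2n}{k+1} + O(1) \approx k + \frac{3.9 n}{k}$. (4) Optimize over $k$: setting $k \approx \sqrt{3.9 n}$ gives $2\sqrt{3.9}\sqrt n \approx 3.9497 \sqrt n$, and a slightly more careful bookkeeping of the additive constants and of the exact constant from \cite{jg1} (which is the improved $1.95$, not $1.97$) tightens this to $3.96944\sqrt n$ for sufficiently large $n$. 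The "sufficiently large $n$" caveat absorbs the $O(1)$ additive terms and the rounding in choosing $k$ an integer.

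The main obstacle I anticipate is step (2): justifying that running independent per-sector unknown-gambler strategies yields a \emph{global} expected capture time of $O(n/k)$ rather than something worse, since the gambler adaptively spreads mass and a cop in a low-mass sector may wait a long time. The resolution — which is exactly the point of the sectoring argument in \cite{jg2} — is that one does not analyze sectors independently in isolation; instead one shows the combined strategy behaves like a single cop on a graph of size $m$, because in any given turn the gambler is somewhere, and whichever sector it is in, that sector's cop is executing a correct pursuit policy, so the overall expected hitting time is governed by the worst sector size $m$, not by the individual masses. Making this precise requires restating the relevant lemma from \cite{jg2} (expected capture time $\le \frac{3.94 n}{k} + O(1)$ via sectoring) and simply substituting the improved constant $1.95$ from \cite{jg1} in place of the $1.97$ used there; since the paper has already reproduced the sectoring lemmas, this substitution is routine. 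The remaining work is the elementary optimization $\min_k \left(k + \frac{3.9 n}{k}\right)$ and verifying the numerical constant $3.96944$, which I would not grind through here.
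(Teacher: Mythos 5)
Your proposal follows essentially the same route as the paper: the paper simply cites Theorem 5 of \cite{jg2}, which gives expected capture time at most $\left(\frac{1}{1-e^{-2}}-1\right)\left(\frac{6n}{k}+1\right)+\frac{3n}{k}+1 \approx 3.9391\,\frac{n}{k}+O(1)$ for $k$ cops against the unknown gambler, and then optimizes $k+3.9391\,\frac{n}{k}$ over $k$ to get $2\sqrt{3.9391}\,\sqrt{n}\approx 3.96944\sqrt{n}$; that is exactly where the constant comes from. Two caveats about your write-up. First, the sub-argument in your step (2) --- one cop per sector running the single-cop unknown-gambler strategy against the \emph{conditional} distribution on its own sector --- would not work as stated: a cop confined to a sector carrying total gambler mass $\sigma_j$ faces expected capture time on the order of $m/\sigma_j$, which is unbounded as $\sigma_j\to 0$; the actual argument in \cite{jg2} is a synchronized sweep in which the $k$ cops jointly cover all $n$ vertices every $O(n/k)$ turns, so each sweep succeeds with probability bounded below and the number of sweeps is geometric. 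You do gesture at this, and you ultimately fall back on citing the $3.94n/k+O(1)$ result from \cite{jg2}, which is all that is needed and is what the paper does. Second, your numerics are backwards: $2\sqrt{3.9}\approx 3.9497$ is \emph{smaller} than $3.96944$, so careful bookkeeping cannot ``tighten'' the former to the latter; the point is rather that the honestly justified per-cop constant is $\frac{6}{e^2-1}+3\approx 3.9391$ (not $2\times 1.95=3.9$, for which there is no rigorous sectored analogue here), and $2\sqrt{3.9391}$ is precisely the $3.96944$ in the statement.
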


\begin{proof}
    As shown by Geneson in Theorem 5 of \cite{jg2}, the expected capture time for the unknown gambler with $k$ cops on a connected $n$-vertex graph, where $n>k,$ is at most 
    $$\left(\dfrac 1{1-e^{-2}}-1\right) \left(\dfrac{6n}k + 1\right) + \dfrac{3n}k + 1.$$
    
    For $n$ sufficiently large, one can choose a $k$ around ${\sqrt{\left(\dfrac6{1-e^{-2}}-3\right)n}}$ so that 
    $$\left(\dfrac 1{1-e^{-2}}-1\right) \left(\dfrac{6n}k + 1\right) + \dfrac{3n}k + 1 + k < 3.96944\sqrt n,$$
    giving our bound.
\end{proof}

\begin{lem}
   On a graph $G$ with $n$ vertices, $\kt G \le \ot[1]{G} < \ceil{2\sqrt{3n}\,}$. 
   %The unknown gambler throttling number of a graph $G$ on $n$ vertices for $n \ge 10^{11}$ is less than $3.96944\sqrt n.$
\end{lem}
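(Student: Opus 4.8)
The plan is to use the distributed $\wmw{1}$ strategy with $k$ cops on a connected $n$-vertex graph, bound the resulting capture time, and then optimize the choice of $k$. By the first theorem of Section~\ref{s:dist}, $k$ cops using the distributed version of $\wmw{1}$ against the $1$-observed gambler achieve expected capture time at most $\frac{3n}{k+1}+1$. Hence $\ot[1]{G} \le k + \frac{3n}{k+1}+1$ for every positive integer $k$, and since the known gambler is dominated by the $1$-observed gambler (by the first remark of Section~\ref{s:throttling}, or directly: the cops can simulate an observation), the same upper bound holds for $\kt G$.

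Next I would optimize. Writing $f(k) = (k+1) + \frac{3n}{k+1}$, we have $\ot[1]{G} \le f(k)$, and $f$ is minimized over the reals at $k+1 = \sqrt{3n}$, giving value $2\sqrt{3n}$. To get a clean integer bound, set $k = \ceil{\sqrt{3n}} - 1$, which is a positive integer for $n$ large enough (and the small cases can be checked directly, or absorbed since for $n \ge 1$ we have $\ceil{\sqrt{3n}} \ge 2$). Then $k+1 = \ceil{\sqrt{3n}} < \sqrt{3n}+1$, so
\[
  \ot[1]{G} \;\le\; (k+1) + \frac{3n}{k+1} \;<\; \bigl(\sqrt{3n}+1\bigr) + \frac{3n}{\sqrt{3n}} \;=\; 2\sqrt{3n} + 1.
\]
This gives $\ot[1]{G} < 2\sqrt{3n}+1$, which is slightly weaker than the claimed $\ceil{2\sqrt{3n}}$; to tighten, I would instead bound $\frac{3n}{k+1}$ by noting $k+1 = \ceil{\sqrt{3n}} \ge \sqrt{3n}$, so $\frac{3n}{k+1} \le \sqrt{3n}$, and thus $\ot[1]{G} \le \ceil{\sqrt{3n}} + \sqrt{3n} < 2\ceil{\sqrt{3n}\,} \le \ceil{2\sqrt{3n}\,} + 1$; getting exactly $\ceil{2\sqrt{3n}\,}$ requires being a bit more careful, comparing the integer $\ceil{\sqrt{3n}} + \sqrt{3n}$ (rounded up, since the throttling number is an integer only if capture time is, but in general it need not be) against $\ceil{2\sqrt{3n}\,}$.

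The main obstacle is this last rounding step: showing $\ceil{\sqrt{3n}\,} + \dfrac{3n}{\ceil{\sqrt{3n}\,}} < \ceil{2\sqrt{3n}\,}$ exactly, rather than with a $+1$ slack. The cleanest route is probably to split on whether $3n$ is close to a perfect square: if $\ceil{\sqrt{3n}\,} = m$, then $m-1 < \sqrt{3n} \le m$, so $\frac{3n}{m} \le \sqrt{3n} \le m$ and $\frac{3n}{m} > \frac{(m-1)^2}{m} = m - 2 + \frac{1}{m}$; combined with $2\sqrt{3n} > 2(m-1)$ one checks $m + \frac{3n}{m} \le 2m - 1 < \ceil{2\sqrt{3n}\,}$ unless $3n$ is extremely close to $m^2$ from below, and the boundary cases $3n = m^2$ or $3n = m^2 - 1$ can be handled by hand. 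Alternatively, one might pick $k+1 = \floor{\sqrt{3n}\,}$ or $\ceil{\sqrt{3n}\,}$ depending on parity to land just under $\ceil{2\sqrt{3n}\,}$; I expect the authors chose the floor/ceiling so that the arithmetic–geometric mean bound $k+1 + \frac{3n}{k+1}$ stays strictly below $\ceil{2\sqrt{3n}\,}$ after rounding, and I would carry out exactly that case analysis.
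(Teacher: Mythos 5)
Your proposal follows essentially the same route as the paper: invoke the distributed $\wmw{1}$ bound of $k + \frac{3n}{k+1} + 1$ from Section~\ref{s:dist} and take $k+1$ to be an integer near $\sqrt{3n}$ (the paper chooses $k = \floor{\sqrt{3n}-\frac{1}{2}}$ and simply asserts the resulting inequality $\frac{3n}{k+1}+k+1 \le \ceil{2\sqrt{3n}\,}$ without carrying out the case analysis you sketch). The rounding subtlety you flag is genuine but is not resolved in the paper either --- equality with $\ceil{2\sqrt{3n}\,}$ can occur for either choice of $k$ (e.g.\ when $3n$ is a perfect square), so your argument is at least as complete as the published one.
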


\begin{proof}
    As shown in Section \ref{s:dist}, the expected capture time for the $1$-observed gambler with $k$ cops on a connected $n$-vertex graph, where $n>k,$ is at most $\frac{3n}{k+1} + 1$.
    
    We can choose $k = \floor{\sqrt{3n}-\frac{1}{2}}$ so that 
    $$\frac{3n}{1+k} + 1 + k \le  \ceil{2\sqrt{3n}},$$
    giving our bound.
\end{proof}

%\begin{lemma}
%    For respective cycles $C_m$ and $C_{m-1}$ on $m$ and $(m-1)$ vertices, $\kt {C_m} \ge \kt {C_{m-1}}.$
%\end{lemma}

The next upper bound is nearly tight, using the lower bound from Lemma \ref{thlower}.

\begin{lem}
    For the path $P_n$ on $n$ vertices, $\kt{P_n} \le \ceil{2 \sqrt n\,}.$
\end{lem}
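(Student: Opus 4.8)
The plan is to exhibit an explicit cop strategy on $P_n$ using $k$ cops and show it guarantees expected capture time at most $n/k$ against the known gambler, so that $\kt{P_n} \le k + n/k$; optimizing over $k$ then gives the bound. Label the vertices of $P_n$ as $1, 2, \dots, n$ in path order. First I would partition the path into $k$ consecutive blocks $B_1, \dots, B_k$, each an interval of at most $\ceil{n/k}$ vertices, and assign cop $j$ to patrol block $B_j$. The key observation is that on a path (more generally, an interval), a single cop who knows the gambler's distribution can patrol an interval of $m$ vertices and catch the gambler there with expected time at most $m$: this is exactly the one-cop known-gambler bound of Komarov and Winkler applied to the induced subpath, except we need the cop to only ever "see" the probability mass that lands inside its own block.

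The cleaner route is to have cop $j$ run the Komarov--Winkler one-cop strategy on its block $B_j$ using the \emph{conditional} distribution $p_i / P(B_j)$ where $P(B_j) = \sum_{i \in B_j} p_i$. By the known-gambler bound, cop $j$ catches the gambler, \emph{conditioned on the gambler currently being in $B_j$}, within expected time $|B_j| \le \ceil{n/k}$ "effective" turns; but effective turns pass at rate $P(B_j)$ per real turn, so this is $\ceil{n/k}/P(B_j)$ real turns in expectation. Since the gambler is in $B_j$ on each turn with probability $P(B_j)$, and the blocks partition the vertex set, a standard argument (summing the contributions, or thinking of each cop as independently "collecting" its share) gives total expected capture time at most $\sum_j P(B_j) \cdot \ceil{n/k}/P(B_j) = \ceil{n/k} \cdot (\text{number of nonempty blocks}) $ — wait, that overcounts, so instead one should argue that at every real turn the system catches the gambler with probability at least $1/\ceil{n/k}$ averaged appropriately, yielding expected time at most $\ceil{n/k}$. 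I would make this precise by the same renewal/Markov bound used in the known-gambler proof: define the cops' joint "score" and show it increases by $1$ in expectation each turn until capture, with capture forced once the score reaches $n$.

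Granting expected capture time at most $n/k$ (up to rounding), we get $\kt{P_n} \le k + \ceil{n/k}$, and choosing $k = \ceil{\sqrt n}$ (or $\floor{\sqrt n}$, whichever is better) gives $k + \ceil{n/k} \le \ceil{2\sqrt n\,}$ after checking the rounding — for $k = \ceil{\sqrt n}$ we have $n/k \le \sqrt n$ so $\ceil{n/k} \le \ceil{\sqrt n}$, and $\ceil{\sqrt n} + \ceil{\sqrt n}$ is within $\ceil{2\sqrt n\,}$ for all $n$ (a short case check on whether $\sqrt n$ is near a half-integer handles the off-by-one). The main obstacle is the middle step: verifying that the $k$ single-block Komarov--Winkler strategies compose correctly to give a \emph{global} expected capture time of $\ceil{n/k}$ rather than something like $k \cdot \ceil{n/k}$. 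The subtlety is that the "number of effective turns to capture" for cop $j$ and the "fraction of real time the gambler spends in $B_j$" are negatively correlated in exactly the way needed, which is why the total comes out to $\ceil{n/k}$; I expect the author handles this by running a single global potential function $\Phi$ (summing each cop's local Komarov--Winkler potential) and showing $\e[\Phi_{t+1} - \Phi_t \mid \text{not caught}] \ge 1$, with $\Phi \ge n$ forcing capture, which is the direct generalization of the one-cop $n$-bound and sidesteps the correlation bookkeeping entirely.
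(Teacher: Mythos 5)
Your overall plan (use $k$ cops to get expected capture time about $n/k$, then optimize over $k$) matches the paper, but your decomposition is genuinely different, and the step you yourself flag as ``the main obstacle'' is a real gap rather than bookkeeping. You partition $P_n$ into $k$ consecutive blocks and have each cop independently run the Komarov--Winkler strategy on its block with the conditional distribution $p_i/P(B_j)$. The only guarantee this gives is that each cop's \emph{conditional} expected capture time is at most $m=\ceil{n/k}$, and such guarantees do not compose. If $x_{j,s}$ denotes cop $j$'s per-turn capture probability under its conditional distribution at time $s$, the combined survival probability is $\prod_{s<t}\left(1-\sum_j P(B_j)\,x_{j,s}\right)$, and the hypotheses $\sum_t\prod_{s<t}(1-x_{j,s})\le m$ for each $j$ do \emph{not} imply that this sums over $t$ to at most $m$. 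Concretely, with $k=2$, $P(B_1)=P(B_2)=\tfrac12$ and $m=2$, the schedules $x_1=(0,1,0,0,\dots)$ and $x_2=(1,0,0,\dots)$ each satisfy the per-block bound, yet the combined per-turn capture probability is $(\tfrac12,\tfrac12,0,0,\dots)$, so the gambler survives forever with probability $\tfrac14$ and the combined expected capture time is infinite. Any correct version of your argument must therefore exploit structural properties of the actual per-block schedules beyond their conditional expected capture times, and your proposed repair --- a potential $\Phi$ with $\e[\Delta\Phi]\ge 1$ per turn and ``capture forced once $\Phi\ge n$'' --- does not reflect how the Komarov--Winkler bound works (capture is never forced, and no such additive potential is exhibited), so the middle step remains unproven.

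The paper avoids the composition problem entirely with a different decomposition: the $k$ cops start at $v_1,v_{1+m},v_{1+2m},\dots$ and move \emph{in unison}, so that at every turn they occupy exactly one residue class modulo $m$ and the per-turn capture probability is $q_j=\sum_{x\equiv j\pmod m}p_x$. Since $q_1,\dots,q_m$ form a genuine probability distribution, the game reduces to a \emph{single} supercop on a path $P_m$, and one application of Komarov--Winkler's Lemma 2.1 gives expected capture time at most $m$ with no composition argument needed. (The paper also takes $k=\ceil{\sqrt n-\tfrac12}$ rather than $\ceil{\sqrt n}$ so that $k+\ceil{n/k}\le\ceil{2\sqrt n\,}$ goes through cleanly.) If you want to salvage your approach, the cleanest fix is to replace the independent block patrols with this synchronized sweep.
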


\begin{proof}
    Label vertices along the path $v_1, \ldots, v_n.$  Let $p_i$ be the probability that the gambler assigns to $v_i.$
    
    Distribute $k = \ceil{\sqrt n - \frac 1 2}$ cops evenly along the path at every $m=\ceil{\frac nk}$ vertex, such that a cop initially occupies $v_i$ iff $i \equiv 1 \pmod m.$
    
    All cops move in unison; at each turn, they all decide to move to the next vertex, or stay put. (For the sake of simplicity, if the last cop has to step beyond $v_n,$ she can stay put and we can ignore her prospect of catching the gambler, thus if anything overestimating the expected capture time.)  Thus at any time the cops occupy vertices $v_j, v_{j+m}, v_{j+2m}, \ldots, v_{j+(k-1)m}$ for some $1 \le j < m,$ and their probability of catching the gambler at this time is $$q_j = \sum_{x \equiv j \pmod m} p_x.$$
    
    We have here effectively reduced our game to one supercop on a path $P_m$ on $m$ vertices, with a probability of $q_i$ assigned to vertex $1 \le i \le m.$
    
    By Lemma 2.1 of \cite{gambler0}, our supercop can guarantee an expected capture time of $m$ by stepping only in the direction of $v_m.$  Thus the known gambler throttling number is at most $$k + m \leq \ceil{2\sqrt n \,}.$$
\end{proof}

\begin{cor}
    All $n$-vertex graphs $G$ with a Hamiltonian path have $\kt{G} \leq \ceil{2 \sqrt n\,}$.
\end{cor}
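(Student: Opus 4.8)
The plan is to reduce the case of a graph $G$ with a Hamiltonian path directly to the case of the path $P_n$, which was handled in the preceding lemma. First I would invoke the second remark of this section: if $H \subseteq G$ are graphs on the same vertex set, then $\kt H \geq \kt G$. Since $G$ has a Hamiltonian path, there is a subgraph $H$ of $G$ on the same $n$ vertices that is isomorphic to $P_n$; the known gambler throttling number is invariant under graph isomorphism, so $\kt{H} = \kt{P_n}$.

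Putting these together, $\kt{G} \leq \kt{H} = \kt{P_n} \leq \ceil{2\sqrt n\,}$, where the last inequality is exactly the statement of the previous lemma. This is essentially a one-line argument once the remark is in hand, so there is no real obstacle; the only thing to be careful about is stating clearly that adding edges to a graph can only help the cops (which is the content of the remark) and that throttling numbers respect isomorphism.

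\begin{proof}
    Since $G$ has a Hamiltonian path, there is a subgraph $H \subseteq G$ on the same vertex set with $H \cong P_n$. By the preceding remark, $\kt{G} \le \kt{H}$, and by the previous lemma, $\kt{H} = \kt{P_n} \le \ceil{2\sqrt n\,}$.
\end{proof}
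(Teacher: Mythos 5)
Your argument is correct and is exactly the one the paper intends: the corollary follows from the monotonicity remark ($\kt{H} \ge \kt{G}$ for spanning subgraphs $H \subseteq G$) applied to the Hamiltonian path as a spanning copy of $P_n$, together with the preceding lemma. Nothing further is needed.
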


Again, we note that the upper bound in the last corollary about Hamiltonian paths is tight in the coefficient of $\sqrt{n}$. The next lemma allows us to derive a similar result for the unknown gambler on graphs with Hamiltonian cycles. 

\begin{lem}
    For the cycle $C_n$ on $n$ vertices, $\kt{C_n} \le \ot[t]{C_n} \le \ut{C_n} < 2.08037 \sqrt n$ for sufficiently large $n.$
\end{lem}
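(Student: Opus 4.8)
The plan is to bound the throttling number of $C_n$ by choosing a number of cops $k$, spacing them evenly around the cycle, and then reducing the game to a single cop on a smaller cycle $C_m$ with $m \approx n/k$, just as in the path lemma above. The key new input is that on a cycle the unknown gambler can be caught faster than the generic $1.95n$ bound: Komarov \cite{komarov} showed the unknown gambler on $C_m$ has expected capture time at most roughly $1.082m$. So I would distribute $k$ cops around $C_n$ at every $m = \lceil n/k \rceil$-th vertex, have all cops move in unison in one direction, and observe that at any time the total probability the cops catch the gambler equals $\sum_{x \equiv j \pmod m} p_x$ for the current offset $j$; this is exactly the dynamics of one supercop on $C_m$ with the aggregated distribution $q_1, \dots, q_m$. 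Since the aggregated game is itself an unknown-gambler game on $C_m$, the supercop achieves expected capture time at most about $1.082m$ by Komarov's cycle strategy.

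Then I would optimize the trade-off: the throttling number is at most $k + 1.082\lceil n/k\rceil \le k + 1.082(n/k) + O(1)$, which is minimized near $k = \sqrt{1.082\,n}$, giving a leading term of $2\sqrt{1.082}\,\sqrt n \approx 2.08037\sqrt n$. I would pick $k = \lfloor \sqrt{1.082\, n} \rfloor$ (or a nearby integer) explicitly, check that $1 < k < n$ for large $n$ so the reduction is valid, and verify that the rounding errors and the $o(m)$ slack in Komarov's bound are absorbed into the strict inequality $< 2.08037\sqrt n$ for sufficiently large $n$. The chain $\kt{C_n} \le \ot[t]{C_n} \le \ut{C_n}$ follows from the earlier remarks comparing the three throttling numbers, so it suffices to bound $\ut{C_n}$.

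One subtlety I would be careful about: the "supercop on $C_m$" framing requires that when the $k$ real cops are at offset $j$, their collective capture probability genuinely equals $q_j$, which holds because the preimages of the $m$ residue classes partition $V(C_n)$ and each real cop covers exactly one class member — so there is no double counting and no vertex is missed. I would also handle the boundary case where $k$ does not divide $n$ evenly the same way as in the path proof, noting that a slight mismatch only adds $O(1)$ and can only overestimate the capture time. A second point worth stating explicitly is that Komarov's cycle bound is an asymptotic statement of the form $(1.082 + o(1))m$; since $m \to \infty$ as $n \to \infty$ with our choice of $k$, the $o(1)$ term is harmless.

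The main obstacle I expect is purely bookkeeping: pinning down the exact constant $1.082$ (Komarov's bound is $\frac{1}{2}\left(1 + \frac{1}{\sqrt 2}\,\right)$ or a similar closed form, not literally $1.082$) and confirming that $2\sqrt{1.082} < 2.08037$ with enough room that the floor/ceiling adjustments and the $o(m)$ error do not push us over the stated threshold. There is no deep idea beyond the reduction-to-a-smaller-cycle trick already used for the path; the work is in choosing $k$ cleanly and checking the numerics carefully enough to justify the strict inequality for all sufficiently large $n$.
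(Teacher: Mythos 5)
Your proposal is correct and matches the paper's approach: the paper also spaces $k$ cops evenly around $C_n$, has them circle in unison in a coin-flipped direction, and bounds the probability of surviving one full ``inning'' of $m=\lceil n/k\rceil$ steps by $1/e$ --- which is exactly Komarov's single-cop cycle argument applied to the aggregated distribution $q_1,\dots,q_m$; the paper simply re-derives that argument for the $k$-cop supercop rather than citing it as a black box. One numeric caution for your bookkeeping step: the relevant constant is exactly $\frac{1}{1-e^{-1}}-\frac{1}{2}\approx 1.081977$, and you must use this exact value rather than the rounded $1.082$, since $2\sqrt{1.082}\approx 2.080384$ already exceeds $2.08037$ while $2\sqrt{1.081977}\approx 2.080362$ does not.
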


\begin{proof}
    We use a strategy similar to that of Lemma 4.4 of \cite{gambler0} and Theorem 5 of \cite{jg2}.  Let $v_1, \ldots, v_n$ be the vertices of the cycle. For a fixed $k$, evenly distribute $k$ cops around the cycle at vertices $v_{a_1}, \ldots, v_{a_k}.$ Each cop puts down a flag at their vertex. The gambler chooses his probability distribution; let $p_i$ be the probability he assigns to vertex $v_i.$  Then, the direction that the cops will take around the cycle in their subsequent turns is determined by a single coin flip.
    
    The cops now move in unison in their chosen direction for the next $m = \ceil{\frac nk}$ turns, until they each reach the flag planted by their colleague. (Some may reach the flag one turn before the rest; they will then stay put for one turn.) We call this sequence of $m$ moves an \textit{inning}.  The cops' strategy is to repeat innings until they win.
    
    At any given turn $1 \le j \le m,$ the probability of capturing the gambler is 
    $$q_j = \sum\limits_{i=1}^{k} p_{a_i+j-1},$$
    and conversely the gambler's probability of evading capture is $1 - q_j.$ 
    Thus, the probability of the gambler surviving a full inning of $m$ turns is 
    $$\prod\limits_{j=1}^{m} \left(1 - q_j\right) = \prod\limits_{j=1}^{m} \left(1 - \sum\limits_{i=1}^{k} p_{a_i+j-1}\right).$$
    
    %The gambler's strategy is to maximize this probability, i.e. when $q_1 = q_2 = \ldots = q_m;$ for the simplicity we can assume this means $p_1 = p_2 = \ldots = p_n = \frac 1 n.$  Thus the gambler's maximal chance of surviving $m$ turns becomes
    %$$\prod\limits_{j=1}^{m} \left(1 - q_j\right) = \left(1 - k\frac 1n\right)^m = %\left(1 - \frac kn\right)^\ceil{\frac nk}.$$
    
    As every vertex is covered by a cop, we have
    $$\sum_{j=1}^m (1-q_j) = \sum_{j=1}^{m} 1 - \sum_{j=1}^{m} q_j \le m-1,$$
    thus 
    $$\prod\limits_{j=1}^{m} \left(1 - q_j\right) \le \left(\frac{m-1}m\right)^m \le \frac 1 e.$$
    Thus, the probability of capture per inning is at least $\left(1-\frac 1 e\right),$ and the expected number of unsuccessful innings for the cops is at most
    $$\frac 1 {1 - \frac 1e} - 1.$$
    
    Since the cops flipped a coin, the expected number of turns in a successful inning is at most $\frac m2,$ so 
    $$\ut{C_n} \le k+\left(\frac 1 {1 - \frac 1e} - 1 \right)m + \frac m2.$$
    
    For all $n$ sufficiently large, one can choose a $k$ around $\sqrt{\left(\frac 1 {1-\frac 1e} - \frac 12 \right) n}$ so that 
    $$\ut{C_n} \le k+\left(\frac 1 {1 - \frac 1e} - 1 \right)m + \frac m2 \le 2.08037\sqrt n,$$
    as desired.
\end{proof}

\begin{cor}
    All graphs $G$ on $n$ vertices with a Hamiltonian cycle have $\kt G \le \ot[t]G \le \ut{G} \le  2.08037 \sqrt n$ for sufficiently large $n.$
\end{cor}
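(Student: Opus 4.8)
The final statement to prove is the corollary that all $n$-vertex graphs $G$ with a Hamiltonian cycle satisfy $\kt{G} \le \ot[t]{G} \le \ut{G} \le 2.08037\sqrt{n}$ for sufficiently large $n$.

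The plan is to reduce the problem on $G$ to the problem on its Hamiltonian cycle $C_n$, which is exactly the content of the immediately preceding lemma. First I would recall the monotonicity remark stated earlier in the excerpt: if $H \subseteq G$ on the same vertex set, then $\ut{H} \ge \ut{G}$ (and likewise for $\kt{}$ and $\ot[t]{}$). The intuition is that adding edges only helps the cops — any cop strategy on $H$ remains a valid cop strategy on $G$, so the cops can do at least as well on the denser graph. Since $G$ has a Hamiltonian cycle, let $H$ be the spanning subgraph consisting of exactly that cycle; then $H$ is isomorphic to $C_n$ and $H \subseteq G$ on the same vertex set.

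The key steps, in order: (1) Let $H$ be the Hamiltonian cycle of $G$ viewed as a spanning subgraph, so $H \cong C_n$. (2) Apply the monotonicity remark to get $\ut{G} \le \ut{H} = \ut{C_n}$, and similarly $\ot[t]{G} \le \ot[t]{C_n}$ and $\kt{G} \le \kt{C_n}$. (3) Apply the preceding lemma, which gives $\ut{C_n} < 2.08037\sqrt{n}$ for sufficiently large $n$, together with the chain $\kt{C_n} \le \ot[t]{C_n} \le \ut{C_n}$ from that same lemma (or from the first remark). (4) Chain the inequalities $\kt{G} \le \ot[t]{G} \le \ut{G} \le \ut{C_n} < 2.08037\sqrt{n}$ to conclude. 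The chain $\kt{G} \le \ot[t]{G} \le \ut{G}$ on the left is itself just the first remark applied to $G$ directly.

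I do not expect any real obstacle here — this is a one-line deduction from the preceding lemma plus the monotonicity remark, and the only thing to be slightly careful about is making sure the remark is stated for subgraphs on the \emph{same} vertex set (it is), so that the Hamiltonian cycle qualifies as the relevant subgraph $H$. There is no calculation to grind through; the $2.08037\sqrt{n}$ bound and the "sufficiently large $n$" qualifier are simply inherited verbatim from the cycle lemma.

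\begin{proof}
    Let $H$ be the spanning subgraph of $G$ consisting of the edges of a Hamiltonian cycle of $G$; then $H$ is isomorphic to $C_n$ and $H \subseteq G$ on the same vertex set. By the monotonicity remark for subgraphs on a common vertex set, $\ut{G} \le \ut{H} = \ut{C_n}$. Combining this with the chain $\kt{G} \le \ot[t]{G} \le \ut{G}$ from the first remark and the preceding lemma's bound $\ut{C_n} < 2.08037\sqrt n$ for sufficiently large $n$, we obtain
    $$\kt{G} \le \ot[t]{G} \le \ut{G} \le \ut{C_n} < 2.08037\sqrt n$$
    for sufficiently large $n$, as desired.
\end{proof}
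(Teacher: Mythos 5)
Your proof is correct and matches the paper's intended argument: the corollary is stated without proof precisely because it follows immediately from the subgraph monotonicity remark applied to the Hamiltonian cycle as a spanning subgraph, combined with the preceding lemma's bound on $C_n$. Nothing is missing.
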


Note that the coefficient of the upper bound in the last corollary is not tight with the lower bound in Lemma \ref{thlower}, but the coefficient of the upper bound in the theorem below is tight.

\begin{thm}
All graphs $G$ on $n$ vertices with a universal vertex have $\kt G \le \ot[1]{G} < \ceil{2 \sqrt n\,}+3$.
\end{thm}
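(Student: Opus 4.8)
The plan is to mimic the distributed Watch-Move-Wait strategy from Theorem~\ref{diststar}, but since the graph only has a universal vertex rather than being a star, one extra move may be needed to reach the gambler's observed vertex from a sector center. I would have all $k$ cops start at the universal vertex $u$. After the gambler's single observed move, the cops know which vertex $v$ the gambler occupied; since $u$ is adjacent to every vertex, one cop can reach any target vertex in at most one step from $u$, but to distribute the work among $k$ cops we instead partition the vertex set into $k$ sectors, place one cop at (a center of) each sector, and have each cop move toward the observed color in its own sector.

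More concretely, first I would partition $V(G) \setminus \{u\}$ (or all of $V(G)$) into $k$ sectors each of size at most $\ceil{n/k} + O(1)$, as in the proof of Theorem~\ref{diststar}. Each cop is assigned one sector; it will be convenient to let each cop temporarily sit at $u$ in turn $1$ (the cop is frozen in turn $1$ by the rules of the game anyway). Color the vertices within each sector with colors $c_1, \dots, c_m$ where $m = \ceil{n/k}+O(1)$, using distinct colors within a sector; when the gambler is observed at a vertex of color $c_i$, every cop moves to the color-$c_i$ vertex of its own sector (staying put if its sector lacks that color). The key point is that from $u$ a cop reaches any vertex of its sector in at most $2$ steps (one step to some adjacent vertex inside the sector, then using the structure — or, more simply, since $u$ is universal, at most $2$ steps suffices to reach any vertex once the sector is organized as a short path or has $u$ as a near-hub), and then the cop waits. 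If $q_i$ denotes the probability the gambler is observed at a color-$c_i$ vertex and $S = \{i : q_i > 0\}$, the expected capture time is at most
$$\sum_{i \in S} q_i \left(2 + \frac{1}{q_i}\right) = 2 + \abs{S} \leq m + 2 \leq \ceil{n/k} + O(1).$$
Adding the $k$ cops gives throttling number at most $k + \ceil{n/k} + O(1)$, and choosing $k = \ceil{\sqrt{n} - \tfrac12}$ makes this at most $\ceil{2\sqrt{n}} + 3$.

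The main obstacle I expect is controlling the exact additive constant: I need to verify that $2$ (rather than $1$) steps really suffice for every cop to reach its assigned color-$c_i$ vertex from the shared start at $u$, and that the sector sizes can be kept to exactly $\ceil{n/k}+1$ or similar so that $\abs{S}+2$ together with $k$ lands below $\ceil{2\sqrt{n}}+3$ and not $\ceil{2\sqrt{n}}+4$. This is a matter of careful bookkeeping: one wants each sector to induce a subgraph where $u$ (a copy of the universal vertex, shared among all sectors) serves as the cop's hub, so every in-sector vertex is within distance $1$ of the cop's position after it has moved once toward its color — effectively reducing each sector to a star centered at the cop, exactly as in Theorem~\ref{diststar}, at the cost of one extra step to first position the cop. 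I would double-check that the "$+3$" absorbs this extra step together with the ceiling rounding in both $k \approx \sqrt n$ and $m \approx \sqrt n$.
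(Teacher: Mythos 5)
Your overall plan is the paper's: the paper proves this theorem in one line by running the distributed $\wmw{1}$ strategy of Theorem~\ref{diststar} with $k=\ceil{\sqrt n - \frac12}$ cops, treating the universal vertex as the star's center. However, there is one concrete slip that costs you the stated constant. You claim each cop needs up to $2$ steps to reach its assigned vertex from the universal vertex $u$, imagining the cop must first ``position'' itself inside its sector and then navigate the sector's internal structure. This is unnecessary: $u$ is adjacent to \emph{every} vertex of $G$, so a cop standing at $u$ reaches any vertex of its sector in exactly one step, just as in the star. The sectors need no internal connectivity at all --- they are merely an arbitrary partition of $V(G)\setminus\{u\}$ into $k$ parts used to assign colors, and the situation is literally identical to Theorem~\ref{diststar}.

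This matters for the bookkeeping you yourself flag as the obstacle. With one step, the expected capture time is at most $\sum_{i\in S} q_i\left(1+\frac{1}{q_i}\right)=\abs{S}+1\le\ceil{\frac{n-1}{k}}+2$, and since $k+\ceil{\frac{n-1}{k}}\le\ceil{2\sqrt n\,}$ for $k=\ceil{\sqrt n-\frac12}$ (the same computation as in the path lemma), the throttling number is at most $\ceil{2\sqrt n\,}+2<\ceil{2\sqrt n\,}+3$. With your extra step the total becomes $k+\ceil{\frac{n-1}{k}}+3$, which only gives $\le\ceil{2\sqrt n\,}+3$ and fails the strict inequality (e.g.\ already at $n=4$, $k=2$ the slack is zero); your $O(1)$'s leave this unresolved. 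Replacing ``at most $2$ steps'' with the observation that the universal vertex is adjacent to everything closes the gap and makes your argument coincide with the paper's.
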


\begin{proof}
We use $k = \ceil{\sqrt n - \frac 1 2}$ cops with the same strategy as in Theorem \ref{diststar}.
\end{proof}

We also find nearly tight bounds on the expected capture time and throttling number for the unknown gambler on complete graphs.

\begin{thm}
On the complete graph with $n$ vertices, the expected capture time for the unknown gambler versus $k$ cops is at most $1 + \frac{n}{k}$.
\end{thm}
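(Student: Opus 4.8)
The plan is to have the cops play an oblivious randomized strategy and reduce the capture time to a geometric random variable plus a single wasted turn. Assume $k < n$ (if $k \ge n$, the cops can simply occupy every vertex as their initial position and catch the gambler on the first turn, so $\e[T] = 1 \le 1 + n/k$). Place the $k$ cops on $k$ arbitrary distinct vertices to start; then, on each turn after the first, move them to a set of $k$ distinct vertices chosen uniformly at random from all $\binom nk$ such sets. This is legal because $K_n$ is complete, so on any turn the cops may relocate to any $k$-subset of the vertices.

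The key point is that on each turn the gambler's position is an independent draw from the fixed distribution $p_1,\dots,p_n$, and on each turn $t \ge 2$ it is independent of the (also independent) uniformly random $k$-set occupied by the cops. Hence, for any gambler distribution and any $t\ge 2$, the probability that the gambler is caught on turn $t$, conditioned on the game still being in progress, equals
$$\sum_{i=1}^n p_i\cdot\pr[\text{vertex } i \text{ is covered on turn } t]\;=\;\sum_{i=1}^n p_i\cdot\frac kn\;=\;\frac kn,$$
and these events are independent across turns $t \ge 2$. Consequently, conditioned on not being caught on turn $1$, the capture time minus one is geometric with success probability $k/n$, so $\e[T\mid T\ge 2] = 1 + n/k$, and therefore
$$\e[T]\;=\;\pr[T=1] + \pr[T\ge 2]\Bigl(1+\tfrac nk\Bigr)\;=\;1 + \pr[T\ge 2]\,\tfrac nk\;\le\;1+\tfrac nk.$$

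There is no real obstacle here; the two things to be careful about are (i) that the per-turn capture probability is exactly $k/n$ independently of the unknown distribution — this is precisely what makes the bound robust against the gambler — and (ii) the first-turn bookkeeping: since the game forces the cops to commit to an initial position before the gambler names the distribution, I do not attempt to randomize turn $1$, and instead absorb it into the additive $+1$, which is exactly why the bound is $1 + n/k$ rather than $n/k$ (a gambler placing all mass off the initially occupied $k$ vertices wastes that turn). One could alternatively run the cops independently and uniformly over all $n$ vertices each turn, in which case the per-turn capture probability is $1-(1-1/n)^k$ and the claim follows from $(1-1/n)^k\le e^{-k/n}\le \tfrac{n}{n+k}$; the distinct-$k$-set version above is both cleaner and slightly stronger.
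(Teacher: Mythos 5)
Your proof is correct and follows essentially the same approach as the paper: cops start at arbitrary positions, then each subsequent turn move to a uniformly random $k$-subset of vertices, giving per-turn capture probability exactly $k/n$ regardless of the gambler's distribution, hence expected capture time at most $1+\frac{n}{k}$. Your write-up is just a more detailed version of the paper's argument, with the $k\ge n$ edge case and the geometric-variable bookkeeping spelled out explicitly.
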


\begin{proof}
The cops choose arbitrary initial positions. On each turn, the cops choose a subset of $k$ vertices uniformly at random and move to those vertices (or stay put if they are already there). The probability of capture on each turn after the first is 
$$\frac{\binom{n-1}{k-1}}{\binom{n}{k}}\sum\limits_{i = 1}^{n} p_{i} = \frac{k}{n},$$
implying the expected capture time is at most $1 + \frac{n}{k}$.
\end{proof}

Since the uniform gambler has expected capture time of at least $\frac{n}{k}$ versus $k$ cops, the last bound is off by at most $1$ from the actual value.

\begin{cor}
For the complete graph $K_{n}$ on $n$ vertices, $\ut{K_{n}} \le  \ceil{2 \sqrt n\,}+1$.
\end{cor}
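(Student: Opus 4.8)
The plan is to apply the preceding theorem (the bound of $1 + \frac{n}{k}$ on the expected capture time for the unknown gambler versus $k$ cops on $K_n$) together with the definition of the unknown gambler throttling number $\ut{K_n}$ as the minimum over positive integers $k$ of $k + U_k$, where $U_k$ is the least expected capture time $k$ cops can guarantee. Since $U_k \le 1 + \frac{n}{k}$ by the previous theorem, we have $\ut{K_n} \le k + 1 + \frac{n}{k}$ for every positive integer $k$, so it suffices to choose a good value of $k$.

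First I would set $k = \ceil{\sqrt n - \tfrac12}$, mirroring the choice used in the path and universal-vertex results earlier in the section. With this choice one checks that $k + \frac{n}{k} \le \ceil{2\sqrt n\,}$: indeed $k \ge \sqrt n - \tfrac12$ gives $\frac{n}{k} \le \frac{n}{\sqrt n - 1/2}$, and $k < \sqrt n + \tfrac12$, so $k + \frac{n}{k} \le \sqrt n + \tfrac12 + \frac{n}{\sqrt n - 1/2}$; a short estimate shows this is at most $2\sqrt n + 1$ and hence at most $\ceil{2\sqrt n\,} + 1$, but one must be slightly careful to land inside $\ceil{2\sqrt n\,}$ rather than $\ceil{2\sqrt n\,}+1$ before adding the extra $+1$ from the theorem. (The analogous computation was already carried out in the path lemma, where $k + m \le \ceil{2\sqrt n\,}$ with $m = \ceil{n/k}$; here the bound $1 + \frac{n}{k}$ is even a little smaller than $\ceil{n/k} + 1$ in the relevant regime, so the same arithmetic goes through.) Adding the $+1$ from the expected capture time bound then yields $\ut{K_n} \le \ceil{2\sqrt n\,} + 1$.

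I expect the only real obstacle to be the routine floor/ceiling bookkeeping: verifying that with $k = \ceil{\sqrt n - \tfrac12}$ one genuinely has $k + \ceil{n/k} \le \ceil{2\sqrt n\,}$ (equivalently that $k + \frac{n}{k} \le \ceil{2\sqrt n\,}$ with room to absorb the integer rounding), for all $n$ rather than just sufficiently large $n$. This can be handled by the standard argument that $k \mapsto k + n/k$ is minimized near $k = \sqrt n$ with minimum $2\sqrt n$, combined with the observation that rounding $\sqrt n - \tfrac12$ up keeps $k$ within the interval where the value stays below $2\sqrt n + 1$; a couple of small cases may need to be checked by hand. Since this is exactly the computation already invoked in the Hamiltonian-path corollary, it is safe to cite that lemma and simply note that the capture-time bound $1 + \frac n k$ is no larger than the $\ceil{n/k}+1$ used there, so the corollary follows immediately.
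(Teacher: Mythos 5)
Your proposal is correct and matches what the paper intends: the corollary follows from the preceding theorem's bound of $1+\frac nk$ by taking $k=\ceil{\sqrt n-\frac12}$ and verifying $k+\frac nk\le\ceil{2\sqrt n\,}$, which is exactly the arithmetic already used in the path lemma. The floor/ceiling check you flag does go through for all $n$ (for $m=\ceil{2\sqrt n\,}$ one needs an integer $k$ with $k^2-mk+n\le 0$, and the nearest integer to $\sqrt n$ always works since $m^2-4n\ge 1$ when $m$ is odd), so the proof is complete.
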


\section{Conclusion}
In this paper, we defined the observed gambler, which is a more realistic version of the gambler for the application of anti-incursion algorithms. We also bounded the expected capture time and throttling numbers for the observed gambler, as well as the known and unknown gamblers that were defined in \cite{gambler0}. 

One of the strategies that we used against the $t$-observed gambler, $\kw{t}$, was based on Komarov and Winkler's strategy against the known gambler from \cite{gambler0}. We showed that $\kw{t}$ works almost as well against the $t$-observed gambler as it does against the known gambler for $t = \omega(n^2)$, but that $\kw{t}$ does not work substantially better than observing a single move and going to the observed vertex when $t = o(\sqrt{n})$. For $t$ between these ranges, it is an open problem to determine how $\kw{t}$ performs against the $t$-observed gambler.

Another natural variant of the gambler is an adversary that is visible for the first $t$ turns of the game and invisible after that point, where the cop knows no other information about the gambler's distribution. For the case $t = 1$, call this the \emph{once-visible} gambler. We show that the once-visible gambler has the same expected capture time on the star as the known gambler. Note that the next proof is almost the same as Theorem \ref{1-obs}, with the only difference being the expected capture time when the gambler is observed at the cop's initial vertex.

\begin{thm}
    Against the once-visible gambler on a connected $n$-vertex graph of radius $r$, the $\wmw{1}$ strategy has expected capture time at most $n+r-1$.
\end{thm}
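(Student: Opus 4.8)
The plan is to reuse the proof of Theorem~\ref{1-obs} almost verbatim; the one change is that the single visible turn now occurs \emph{inside} the game, so if the gambler appears at the cop's vertex it is caught on the spot. As before, let $G$ be connected on $n$ vertices with radius $r$ and place the cop's initial position at a center $c$, so every vertex lies within distance $r$ of $c$. On turn $1$ the gambler is visible; say it appears at $v_i$ with probability $p_i$. The cop remains at $c$ during turn $1$; if the gambler was seen at $c$ the game is already over, and otherwise, starting from turn $2$, the cop walks toward the observed vertex $v_i$ and then waits there, exactly as in $\wmw{1}$.

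Next I would bound the conditional capture time given where the gambler appeared, just as in Theorem~\ref{1-obs}. If the gambler appears at $c$, the capture time equals $1$. If it appears at some $v_i \ne c$ with $p_i > 0$, the cop takes at most $d(c,v_i)\le r$ turns to reach $v_i$ and then an expected $1/p_i$ further turns of waiting, for a total of at most $r + \tfrac1{p_i}$ --- the identical accounting used in Theorem~\ref{1-obs}. The only discrepancy between the two proofs is thus the term for $v_i = c$: there it was (pessimistically) bounded by $r + \tfrac1{p_c}$, here it contributes just $1$.

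Finally I would sum over $S = \{\, i : p_i > 0 \,\}$, splitting on whether $p_c = 0$. If $p_c = 0$, then $c \notin S$, so $\abs{S}\le n-1$ and $\e[T] \le \sum_{i\in S} p_i\!\left(r+\tfrac1{p_i}\right) = r + \abs{S} \le n + r - 1$. If $p_c > 0$, then
\[
  \e[T] \;\le\; p_c\cdot 1 + \sum_{\substack{i\in S \\ v_i \ne c}} p_i\!\left(r+\tfrac1{p_i}\right)
  \;=\; p_c + r(1-p_c) + (\abs{S}-1)
  \;=\; r + \abs{S} - 1 - p_c(r-1),
\]
which is at most $r + \abs{S} - 1 \le n + r - 1$ since $r \ge 1$ for any connected graph on at least two vertices (the single-vertex graph being trivial). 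The calculations are routine; the only points requiring care are that the improvement over Theorem~\ref{1-obs} really is by exactly $1$, and that it arises for two different reasons --- from $\abs{S}\le n-1$ when $p_c=0$, and from the $v_i=c$ term together with $r\ge1$ when $p_c>0$ --- plus matching the turn-by-turn bookkeeping of Theorem~\ref{1-obs} so that a walk of length $\le r$ followed by a geometric wait still costs at most $r + 1/p_i$ once the visible turn has been used up.
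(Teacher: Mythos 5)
Your proof is correct and follows essentially the same route as the paper's: run $\wmw{1}$ from a center, charge $1$ turn to an appearance at $c$ and $r+\tfrac1{p_i}$ to each other positive-probability vertex, and observe that the sum drops to $r+n-1$. Your case split on $p_c=0$ versus $p_c>0$ (and the explicit use of $r\ge 1$ to discard the $-p_c(r-1)$ term) is just a slightly more careful rendering of the same computation, which the paper handles uniformly by excluding $c$ from $S$.
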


\begin{proof}
Let $G$ be a connected $n$-vertex graph of radius $r$ and suppose that the cop has initial position at a center vertex $v_c$ of $G$, so every other vertex in $G$ has distance at most $r$ to $v_c$. If $p_i > 0$ and the cop chooses final vertex $v_i$, then the expected capture time is at most $r+\frac{1}{p_i}$. If the gambler appears at vertex $v_c$ on the first turn of the game, then capture occurs on the first turn. 

Let $S = \left\{i: 1\leq i \leq n \wedge p_i > 0 \wedge i \neq c  \right\}$, the set of indices of all vertices besides $v_c$ with positive probability. The expected capture time is at most 
$$p_c+\sum\limits_{i \in S} p_i \left(r+\frac{1}{p_i}\right) = r + \abs{S} \leq r+n-1.$$
\end{proof}

The last bound gives us the next two corollaries:

\begin{cor}\label{upgen}
$\wmw{1}$ has expected capture time at most $3n/2-1$ against the once-visible gambler on any connected $n$-vertex graph. 
\end{cor}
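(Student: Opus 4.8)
The plan is to derive this as an immediate corollary of the preceding theorem, exactly mirroring how the first version of this corollary (for the plain $1$-observed gambler) was obtained from Theorem~\ref{1-obs}. The preceding theorem already gives that, on a connected $n$-vertex graph of radius $r$, the $\wmw{1}$ strategy against the once-visible gambler has expected capture time at most $n + r - 1$ when the cop starts at a center vertex. So the only extra input needed is the standard graph-theoretic fact — already used earlier in the paper to pass from Theorem~\ref{1-obs} to Corollary~\ref{upgen} — that every connected graph on $n$ vertices has radius at most $n/2$.

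Concretely, I would proceed in two steps. First, recall or briefly justify the radius bound: any connected $n$-vertex graph $G$ has a spanning tree, distances in $G$ do not exceed distances in that tree, and a tree on $n$ vertices has diameter at most $n-1$, hence radius at most $\lceil (n-1)/2 \rceil = \floor{n/2} \le n/2$; equivalently, the center of a diametral path has eccentricity at most $\lceil \mathrm{diam}(G)/2\rceil$. Second, substitute $r \le n/2$ into the bound $n + r - 1$ from the preceding theorem to get expected capture time at most $n + n/2 - 1 = 3n/2 - 1$ on any connected $n$-vertex graph, which is the claim.

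There is essentially no obstacle here: the estimate is uniform over all connected $n$-vertex graphs, so there is no case analysis, and the bound $3n/2-1$ is stated in a form that already absorbs the worst case (even paths, where the radius equals $n/2$). The only point requiring the slightest care is to apply the radius inequality with the floor, noting $\floor{n/2} \le n/2$, so the clean bound $3n/2-1$ remains valid for odd $n$ as well. I would therefore keep the proof to a single line invoking the previous theorem together with the radius bound.
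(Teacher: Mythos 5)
Your proposal is correct and is exactly the paper's route: the corollary follows by substituting the bound $r \le n/2$ (valid for every connected $n$-vertex graph) into the estimate $n+r-1$ from the preceding theorem. The paper leaves this substitution implicit, just as it did for the analogous corollary after Theorem~\ref{1-obs}, so no further justification is needed.
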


\begin{cor}
If both cop and gambler use best play, the expected capture time for the once-visible gambler on the $n$-vertex star is exactly $n$. 
\end{cor}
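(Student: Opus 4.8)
The plan is to sandwich the expected capture time between $n$ and $n$: the upper bound is immediate from the theorem above, and the lower bound comes from the gambler playing a single explicit distribution.

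For the upper bound I would simply observe that the $n$-vertex star has radius $1$, so by the preceding theorem the cop, starting at the center and running $\wmw{1}$, catches the once-visible gambler in expected time at most $n+r-1 = n+1-1 = n$. Hence optimal play for the cop achieves at most $n$.

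For the lower bound I would have the gambler use the uniform distribution $p_i = 1/n$ on all $n$ vertices and argue that this forces expected capture time at least $n$ against \emph{any} cop strategy --- whatever the cop's initial position, and no matter how she exploits her single observation. The point is that at every turn the cop sits on exactly one vertex, and the gambler's position on that turn is uniform on the vertex set and independent of everything that happened before (in particular of the first-turn position that the cop observed); so, conditioned on the gambler not yet having been caught, the cop catches him on the current turn with probability exactly $1/n$. Thus the capture time is geometric with parameter $1/n$ and mean exactly $n$. Since no cop strategy beats $n$ against this gambler, optimal play gives at least $n$, and combined with the upper bound the value is exactly $n$.

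The statement is essentially a two-line consequence of the theorem plus this one computation, so there is no serious obstacle; the only place to be slightly careful is phrasing the lower bound so that it covers randomized and adaptive cops, which is handled by conditioning on the cop's (possibly random) position each turn before revealing the gambler's location. One could instead deduce the lower bound from the known-gambler value of $n$ for the star in \cite{gambler0}, noting that against the known gambler the cop may simulate an observed-gambler cop by drawing her own first sample from the known distribution --- legitimate precisely because the gambler's moves are i.i.d. --- but the direct uniform-distribution computation is cleaner and self-contained.
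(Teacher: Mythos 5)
Your proof is correct and matches the paper's intended argument: the corollary is stated as an immediate consequence of the preceding theorem (upper bound $n+r-1=n$ since the star has radius $1$), with the lower bound being the standard observation that the uniform distribution makes every turn an independent $1/n$-chance of capture, so no cop strategy (observed or not) beats expected time $n$. Your care in noting that the observation is useless against i.i.d.\ uniform moves, and your preference for the direct computation over the simulation-of-the-known-gambler route, are both sound.
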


We finish with two questions that are not covered by our results.

\begin{enumerate}
\item For the $t$-observed gambler, what expected capture time can be achieved if up to $c$ of the observations can be wrong?
\item Suppose that we only have $m$ turns to catch the gambler on some connected $n$-vertex graph. What is the maximum probability of capture (assuming best play)? 
\end{enumerate}

The second question was also asked in CrowdMath 2017, where a Dijkstra-like algorithm was found for the known gambler \cite{crowdmath}, but no algorithm was found for the unknown gambler. Furthermore, no bounds for this problem were found for either type of gambler, even for specific families of graphs. Note that $m$ must be at least the radius or else the gambler sits out of range.

\section*{Acknowledgments}
Thanks to Tanya Khovanova, Slava Gerovitch, Pavel Etingof, Claude Eicher, the MIT Math Department, and the MIT PRIMES program for providing us with the opportunity to work together on this project.

\printbibliography
\end{document}